\def\d{\mathrm{d}}
\newcommand{\E}{\mathbb{E}}
\newcommand{\F}{\mathcal{F}}
\newcommand{\p}{\mathbb{P}}
\newcommand{\id}{\mathds{1}}
\renewcommand{\(}{\left(}
\renewcommand{\)}{\right)}
\renewcommand{\[}{\left[}
\renewcommand{\]}{\right]}
\renewcommand{\ge}{\geqslant}
\renewcommand{\le}{\leqslant}
\renewcommand{\geq}{\geqslant}
\renewcommand{\leq}{\leqslant}
\renewcommand{\epsilon}{\varepsilon}
\theoremstyle{plain}
\newtheorem{theorem}{Theorem}
\newtheorem{proposition}{Proposition}
\theoremstyle{definition}
\theoremstyle{remark}
\newtheorem{remark}{Remark}
\newcommand{\cet}{\begin{center}}
\newcommand{\ecet}{\end{center}}
\begin{document}
	
\title{Optimal VPPI strategy under Omega ratio with stochastic benchmark}
\date{}


\author{ {\small Guohui Guan}\thanks{\scriptsize { \tiny School of Statistics, Renmin University of China, Beijing, China.
			Email: \texttt{guangh@ruc.edu.cn}}}	
	\and
	{\small Lin He}\thanks{\scriptsize  { \tiny School of Finance, Renmin University of China, Beijing, China.
			Email: \texttt{helin@ruc.edu.cn}}}
	\and
	{\small Zongxia Liang}\thanks{\scriptsize  { \tiny Department of Mathematical Sciences, Tsinghua University, China.
			Email: \texttt{liangzongxia@mail.tsinghua.edu.cn}}}
	\and
	{\small Litian Zhang}\thanks{\scriptsize { \tiny Corresponding Author.\!\! Department of Mathematical Sciences, Tsinghua University, China.\!\!
			Email: \texttt{zhanglit19@mails.tsinghua.edu.cn}}}}
\numberwithin{equation}{section}

\maketitle

\begin{abstract}	
	
	
	This paper studies a variable proportion portfolio insurance (VPPI) strategy. The objective is to determine the risk multiplier by maximizing the extended Omega ratio of the investor's cushion, using a binary stochastic benchmark. When the stock index declines, investors aim to maintain the minimum guarantee. Conversely, when the stock index rises, investors seek to track some excess returns. The optimization problem involves the combination of a non-concave objective function with a stochastic benchmark, which is effectively solved based on the stochastic version of concavification technique. We derive semi-analytical solutions for the optimal risk multiplier, and the value functions are categorized into three distinct cases. Intriguingly, the classification criteria are determined by the relationship between the optimal risky multiplier in \cite{zieling2014performance} and the value of   $1$. Simulation results confirm the effectiveness of the VPPI strategy when applied to real market data calibrations. 
	
	\noindent
	{\bf Keywords: }VPPI; Omega ratio; Stochastic benchmark; Risk multiplier; Concavification.
	\vskip 5pt  \noindent
	JEL Classifications: G11, C60, C61.
\end{abstract}

\section{Introduction}

Risk-averse investors adopt portfolio insurance strategies to manage downside risk while capturing part of the stock market flourish.  The original idea, introduced by \cite{leland1980should}, is known as the option-based portfolio insurance (OBPI) strategy.  Subsequently, \cite{black1987simplifying} proposed the constant proportion portfolio insurance (CPPI) strategy, which not only matches the performance of OBPI but also offers a more straightforward implementation.  CPPI involves dynamically adjusting the asset allocation in the portfolio based on a pre-determined formula, which typically involves a combination of risky assets (such as stocks) and safer assets (such as bonds or cash). Researchers and practitioners in the field of asset management have extensively studied and applied CPPI, with numerous studies delving into various aspects of dynamic portfolio insurance. For instance, refer to \cite{bertrand2001portfolio, chen2008dynamic, schied2014model, hambardzumyan2019dynamic}, among others.

The risk multiplier plays a pivotal role in CPPI strategies, influencing both portfolio leverage and performance. The proportion invested in the risky asset is determined by multiplying the cushion with the risk multiplier. Traditionally, the risk multiplier has been considered fixed and predetermined. However, during periods of heightened market volatility, a substantial risk multiplier can lead to significant losses. Consequently, incorporating a time-varying risk multiplier can enhance performance. This flexibility allows investors to adjust their risk exposure in response to evolving market conditions. As a result, variable proportion portfolio insurance (VPPI) strategies have garnered extensive attention, falling into two categories: parametric optimization and non-parametric optimization.  In the context of parametric optimization, \cite{zieling2014performance} introduced an optimization model for the risk multiplier for investors exhibiting hyperbolic absolute risk aversion (HARA) preference.  
The research findings suggest that the risk multiplier is influenced by the Sharpe ratio of the risky asset and the risk-averse attitude of the investor. \cite{schied2014model} concerned a dynamic extension of the CPPI strategy in which the multiplier level may depend on
quantities including time and price evolution.   Similarly, \cite{bertrand2016portfolio} employed extreme value theory to establish an upper bound for the risk multiplier.  \cite{dichtl2011portfolio} explored the incorporation of cumulative prospect theory to capture the heterogeneous attitudes of investors towards gain and loss within the VPPI framework. \cite{ameur2014portfolio} assumed that the risk multiplier is determined
from market and portfolio dynamics. 
For non-parametric optimization, \cite{cesari2003benchmarking} integrated technical predictions, such as the 10-week moving average, into the strategy design. Additionally, statistical and machine learning methods have been employed in the design of the risk multiplier see \cite{chen2008dynamic, lee2008new, dehghanpour2017robust}, etc. All of these researches have demonstrated improved performance of VPPI strategies compared to standard CPPI strategies. This paper adopts a parametric modeling approach, aiming to identify the optimal risk multiplier based on appropriate optimization objectives.

This paper aims to determine the optimal VPPI strategy based on a specific criterion. In the literature, various approaches have been used to optimize utility or maximize returns for a given level of risk. In \cite{zieling2014performance}, the HARA utility function is optimized to find the optimal risk multiplier for CPPI strategies. Another commonly applied criterion in economics is maximizing the expected return while considering a certain level of risk. The Sharpe ratio, which quantifies the trade-off between risk and reward using variance and mean, has been widely used for this purpose. However, it has limitations as it assumes an elliptical distribution of portfolio returns. To address the limitations of the Sharpe ratio, alternative risk measures have been proposed. One such measure is the Omega ratio, introduced in \cite{OmegaRatio}, which considers the entire return distribution and distinguishes gains and losses from a specified threshold. Unlike the Sharpe ratio, the Omega ratio incorporates both upside and downside deviations and has characteristics of symmetric and asymmetric risk measures. Moreover, the Omega ratio does not assume a specific distribution of portfolio returns and is commonly applied in finance, see  \cite{kane2009optimizing, bertrand2011omega,  guastaroba2016linear, biedova2020multiplier}, etc. In this paper, we depart from the approach of \cite{zieling2014performance} and consider maximizing the Omega ratio of the terminal cushion as the objective for investors. However, it is worth noting that \cite{2019OptimalB} states that the maximization problem of the Omega ratio is unbounded in the continuous setting. To address this issue, we extend the Omega ratio following the approach in \cite{POWPR} to overcome the problem of ill-posedness. 

The Omega ratio is a measure that distinguishes gains and losses based on a specified benchmark. However, selecting an appropriate benchmark is not formally guided, as mentioned in \cite{OmegaRatio}. In the study conducted by \cite{kapsos2014worst}, it was highlighted that the solution to maximize the Omega ratio is highly sensitive to the chosen benchmark. This poses a challenge as financial markets are inherently uncertain and prone to volatility. A deterministic threshold may not adequately capture the randomness and variability of market conditions. To address this issue, researchers have explored the concept of a stochastic benchmark in several studies, such as \cite{de2011loss}, \cite{sprenger2015endowment}, and \cite{he2022endogenization}, etc. By introducing stochasticity into the benchmark, investors can better account for the unpredictable nature of asset returns and ensure a more realistic assessment of gains and losses. In the field of asset management, investors typically aim for upside capture and downside protection, often relying on a binary benchmark. In other words, when the stock index performs poorly, investors seek to maintain a minimum guarantee. Conversely, when the stock index performs well, investors aim to track some excess returns. Taking these practical considerations into account, we formulate a binary stochastic benchmark to determine the gains and losses in the extended Omega ratio.

The objective function in this context is a non-linear ratio, lacking convexity or concavity. Moreover, the optimization problem is complicated because of stochastic benchmark. These two characteristics combine to create the main challenges faced when attempting to solve this optimization problem. {Following the  fractional programming method in  \cite{dinkelbach1967nonlinear}, we transform the original optimization problem into a family of solvable linearized problems in which the objective function is defined as the numerator of the original problem minus the denominator multiplied by a penalty parameter.  Remarkably, one of these reformulated problems provides an exact solution to the original problem. Additionally, we employ the martingale method to simplify the problem, transitioning from optimizing over a space of stochastic processes to optimizing over a space of random variables, as exemplified in \cite{karatzas1998methods}. For objective functions that are not concave, concavification techniques are commonly employed, see \cite{carpenter2000does, POWPR, he2018profit}. In our paper, we tackle the optimization problem that combines a non-concave objective function with a stochastic benchmark.
 {This kind of problem lacks systematic research, and our previous work \cite{liang2021framework} investigates an abstract framework regarding this kind of problem and propose a stochastic version of the concavification technique.} Applying the results from this work, we derive the semi-analytical optimal risky multiplier and its corresponding value function.} Theoretical results indicate that both the guarantee ratio and the capturing ratio exert significant influences on the value functions. When both of these ratios are low, the value function approaches infinity. Under less stringent conditions, the cushion consistently exceeds the benchmark, leaving only room for gains. Conversely, with stricter requirements, the form of the value functions falls into one of three distinct categories. Remarkably, the criteria for this classification are based on the relationship between the optimal risky multiplier as detailed in \cite{zieling2014performance} and the value of 1. Furthermore, the optimal risky multiplier exhibits a distinct hump-shaped pattern attributed to the inherent loss aversion attitude of investors. This behavior leads investors to increase their allocation to risky assets as they approach the maturity date, driven by the desire to minimize potential losses and maximize gains through speculative investments. Notably, the optimal risk multiplier tends to be lower compared to the one under the CRRA utility framework. This difference can be attributed to the presence of loss aversion, which promotes a more cautious approach characterized by lower leverage.

While we have successfully derived semi-closed-form solutions within a continuous-time framework, it is imperative to assess the efficacy of these strategies in a discrete-time context. The effectiveness of these strategies in discrete-time scenarios is crucial, as it directly influences their practical applicability (as discussed in \cite{black1987simplifying} and \cite{balder2009effectiveness}). Specifically, our investment strategy involves adjustments at fixed time intervals, as seen in \cite{branger2010optimal} and \cite{happersberger2020estimating}, or it may be triggered by reaching specific thresholds, similar to the approach in \cite{weng2013constant}. Given that we distinguish between the minimal guarantee and the benchmark, selecting a consistent threshold can be challenging. Therefore, in this paper, we opt for the fixed time interval adjustment mechanism. For comparative analysis, we utilize the fixed multiplier strategy from \cite{zieling2014performance} as well as a constant multiplier as benchmarks. Under these settings, we evaluate the effectiveness of the strategy using various criteria, including Omega ratio, guarantee probability, expected terminal excess value, probability of underperformance, and winning probability. These metrics allow for a comprehensive assessment of the strategy's performance. Our results confirm the effectiveness of the VPPI strategy when applied to real market data calibrations. Furthermore, we observe a negative correlation between the performance of the VPPI strategy and both the required guarantee ratio and the capturing ratio. A higher guarantee (or capturing) ratio reduces the flexibility in investment and increases the difficulty in maintaining the minimal guarantee or tracking the benchmark. It is worth noting that the impact of performance parameters is multifaceted, as they not only affect investment performance but also influence variations in benchmark performance.

The contribution of the paper is threefold. First, we introduce the binary stochastic benchmark and the Omega ratio within the VPPI model, effectively capturing the objectives of investors in the field of asset management. Second, we expand the concavification technique into a stochastic framework, successfully addressing optimization challenges associated with non-concave objective functions and stochastic benchmarks. We establish semi-analytical value functions, categorizing them into three distinct cases. Lastly, our research yields several intriguing findings. We observe that the optimal risk multiplier follows a distinctive hump-shaped pattern and tends to be lower in comparison to fixed multipliers. Furthermore, the VPPI strategy outperforms alternative policies by demonstrating superior tracking of benchmarks and reducing negative gaps. We also elucidate the intricate relationship between expected returns on risky investments and the optimal strategies.

The remainder of the paper is organized as follows. Section 2 provides the mathematical formulation of the investor's optimization problem with Omega ratio and stochastic benchmark. In Section 3, we use the stochastic version of concavification techniques to solve the optimization problem. Section 4 exhibits the numerical results and  the main findings. Finally,  we conclude the paper in Section 5.

\section{Model Setting}


Consider a filtered probability space $(\Omega, \mathcal{F}, \{\mathcal{F}_t\}_{t \geq 0}, \mathbb{P})$, where $\mathbb{P}$ is the reference probability measure. Before introducing the VPPI  strategy, let us discuss the construction of the CPPI strategy.

{The current value of a portfolio at time $t$ is denoted as $V_t$}, which comprises both the value of the risky asset and the risk-free asset. The floor value, denoted as $F_t$, represents the minimum guarantee or downside protection level. The objective of the investor is to ensure that the portfolio value remains above this guarantee. The cushion at time $t$, denoted as $C_t$, is defined as the difference between the current portfolio value and the floor value, i.e.,
$$C_t=V_t-F_t.$$
In the CPPI  strategy, the risk multiplier, denoted as $m$, is a constant value. The allocation to the risky assets is determined as a proportion of the cushion. Specifically, the investment in risky assets  at time $t$ can be calculated by
\begin{equation}\label{def of C}
	e_t=m(V_t-F_t)\triangleq mC_t.
\end{equation}
The remaining portion $(V_t - e_t)$ is allocated to risk-free assets. The CPPI strategy is a straightforward approach that has found wide application in asset allocation. The risk multiplier $m$ plays a crucial role in the CPPI strategy and has been assigned different values in previous research. For instance, $m$ was set to 5 in \cite{herold2007total}, 14 in \cite{annaert2009performance}, and 3 in \cite{zhang2015optimal}. Specifically,  \cite{biedova2020multiplier} chose the optimal multiplier value with respect to the Omega ratio in the CPPI strategy. 

However, it is important to note that using a constant risk multiplier can have limitations. One such limitation is a lack of flexibility in adapting to changing market conditions. Market volatility and risk levels can fluctuate over time, and a fixed multiplier may not effectively respond to these variations. Additionally, a constant risk multiplier can lead to larger losses during periods of significant market downturns. Setting the multiplier too high could result in a higher allocation to risky assets when market conditions rapidly deteriorate, potentially amplifying losses.

The VPPI strategy addresses the limitations of a constant risk multiplier by employing time-varying multipliers that adjust dynamically based on market conditions. In the VPPI strategy, the constant risk multiplier $m$ is replaced with a square integrable and adaptive process denoted as $$\{m_t, t\ge 0\}.$$ The value of $m_t$ changes over time according to a predefined rule or algorithm, allowing it to capture changes in market conditions and risk levels. In  \cite{zieling2014performance}, the optimal risk multiplier for the VPPI strategy is obtained by maximizing the CRRA utility of the cushion. This approach takes into account investors' risk preferences and aims to find the risk multiplier that optimizes their utility function. In our work, different from \cite{zieling2014performance}, {we optimize the risk multiplier for the VPPI strategy using the extended Omega ratio. The extended Omega ratio is a risk-adjusted performance measure that considers both downside risk and upside potential. By optimizing the risk multiplier based on this measure, we aim to enhance the risk-adjusted performance of the VPPI strategy.}

Similar to \cite{biedova2020multiplier}, we consider continuous time rebalancing in the Black-Scholes market. There are two assets in the financial market: a  risk-free asset $B$ and a  risky asset $S$, which are described as follows:
\begin{equation*}
	\left\{
	\begin{aligned}
		\frac{dB_t}{B_t}&=rdt,~B_0=1,\\
		\frac{dS_t}{S_t}&=\mu dt+\sigma dW_t,~S_0=1,
	\end{aligned}
	\right.
\end{equation*}
where $\{W_t: t \geq 0\}$ is a standard Brownian motion defined on the filtered probability space  $(\Omega, \F, \{\F_t\}_{t \geq 0}, \p)$. The first equation represents the risk-free asset, with the constant interest rate $r$ representing the risk-free rate of return. The second equation describes the risky asset, with $\mu$ representing the expected return and $\sigma$ representing the volatility of the asset.

Suppose that the investor has an initial value $V_0=1$, and the guarantee $F_t$ in the VPPI strategy satisfies the equation
\begin{equation*}
	dF_t=rF_tdt, ~F_0=kV_0=k,
\end{equation*}
where $k$ is the guarantee proportion.  The initial guarantee $F_0$ is set as a fixed proportion $k$ of the initial portfolio value $V_0$. This setup ensures that the guarantee keeps pace with the growth of the risk-free asset and provides a minimum protection level for the portfolio.

In the Omega ratio, as originally proposed by \cite{OmegaRatio}, the benchmark is traditionally considered as a constant value that distinguishes gains from losses. The benchmark used by the investor  is often stochastic and varies according to the prevailing market conditions. Specifically, when the financial market performs well, the investor wants to track a portion of returns exceeding the index. When it performs poorly, the benchmark is set to be 0, i.e.,  the investor wants to maintain the minimal guarantee. To evaluate the cushion $C_T$ of the investor, we introduce a binary benchmark formulated as follows:
\begin{equation}\label{def of Y}
	Y=
	\left\{
	\begin{aligned}
		&0,&S_T<F_T,\\
		&\eta (S_T-F_T),&S_T\ge F_T.
	\end{aligned}
	\right.
\end{equation}
Here $\eta\le 1$ is a multiplier. Our goal is to find the optimal $m$ to maximizing the extended Omega ratio with a binary benchmark:
\begin{equation}\label{Omega ratio}
	\max\limits_{\{m_t,0\le t\le T\}}\frac{\E\left[U(C_T-Y)\id_{\{C_T>Y\}}\right]}{\E\left[U(Y-C_T)\id_{\{C_T\le Y\}}\right]},
\end{equation}
where $U(x)=x^{\gamma}, \gamma<1$. In Problem \eqref{Omega ratio}, when $C_T$ exceeds $Y$ ($C_T$ falls short of $Y$), the investor realizes gains (suffers losses). The investor's preference is characterized by the attitude towards gains relative to the attitude towards losses. When the parameter $\gamma$ is set to 1, Problem \eqref{Omega ratio} represents the standard Omega ratio. However, as noted by \cite{2019OptimalB}, in the continuous setting, the maximization problem of the Omega ratio becomes unbounded. To address this, we extend the Omega ratio in Problem \eqref{Omega ratio} by incorporating the function $U(x) = x^{\gamma}$. This extension has also been explored in \cite{POWPR}.  

\section{Optimal multiplier}
In this section, we focus on deriving the optimal multiplier for Problem \eqref{Omega ratio}. This problem involves fractional optimization with a stochastic benchmark. Stochastic dynamic programming method cannot be applied. To address it, we initially employ the martingale method, as outlined in \cite{KLSX1991}, and the linearization technique, as described in \cite{POWPR}. These methods allow us to transform Problem \eqref{Omega ratio} into a non-concave optimization problem within a space of random variables featuring a stochastic benchmark. Subsequently, we employ the stochastic version of the concavification technique, as proposed in \cite{liang2021framework}, to tackle and resolve the transformed problem.
\subsection{Martingale method}
We begin by deriving the dynamics of the cushion, represented as the excess value $C_t$. This dynamic is a consequence of \eqref{def of C}, and it follows that
\begin{equation}\label{eq of C}
\begin{split}
	dC_t&=dV_t-dF_t
	=(V_t-e_t)\frac{dB_t}{B_t}+e_t\frac{dS_t}{S_t}-dF_t\\
	&=(C_t+F_t-m_tC_t)rdt+m_tC_t\frac{dS_t}{S_t}-rF_tdt\\
	&=(C_t-m_tC_t)rdt+m_tC_t\frac{dS_t}{S_t}\\
	&=\left[m_t(\mu-r)+r\right]C_tdt+m_t\sigma C_tdW_t,
\end{split}
\end{equation}
and $C_0=V_0-kV_0=1-k$.
Based on \eqref{eq of C}, we can employ the martingale method, as described in \cite{KLSX1991} and \cite{CH1989}, to transform the original problem into the following optimization problem concerning the terminal cushion:
\begin{equation}\label{reduced problem}
	\sup_{C_T\in\mathcal{C}}~\frac{\E\left[U(C_T-Y)\id_{\{C_T>Y\}}\right]}{\E\left[U(Y-C_T)\id_{\{C_T\le Y\}}\right]}
\end{equation}
with budget constraint
\begin{equation}\label{def of flower C}
	\mathcal{C}=\left\{C_T~\text{is nonnegative and }\mathcal{F}_T\text{-measurable}~\bigg|~\mathbb{E}\left[\xi_T C_T\right]\le 1-k\right\},
\end{equation}
where $\xi_t=\exp\left\{-rt-\frac{\theta^2}{2}t-\theta W_t\right\}$ is the pricing kernel with $0\le t\le T$  and $\theta = \frac{\mu -r}{\sigma}$. 
\subsection{Linearization}
The objective function in Problem \eqref{reduced problem} is  a fractional form, which cannot be solved directly. To address this, we utilize fractional programming theory and introduce a family of linearized problems. This family of problems is parameterized by a non-negative factor $\lambda$, as described in \cite{POWPR}:
\begin{equation}\label{def of f}
f(\lambda) = \sup_{C_T\in\mathcal{C}}~\left\{\E\left[U(C_T-Y)\id_{\{C_T>Y\}}\right]-\lambda\E\left[U(Y-C_T)\id_{\{C_T\le Y\}}\right]\right\}.
\end{equation}
By linearizing the original fractional objective function, we transform the problem into a set of linearized problems. We have the following proposition on relationship between $f(\cdot)$ in Problem \eqref{def of f} and the optimal solution to Problem \eqref{reduced problem}:
\begin{proposition}\label{prop linearized}
Assume $1-k<\E[\xi_T Y]$, then $f(\cdot)$ has a zero point $\lambda^*$. If the optimization problem \eqref{def of f} with $\lambda$ being $\lambda^*$ has an optimal solution $C_T^*$, then $C_T^*$ is also an optimal solution for Problem \eqref{reduced problem}, and $\lambda^*$ is the  unique  value  function of Problem \eqref{reduced problem}.
\end{proposition}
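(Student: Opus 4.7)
The plan is to prove Proposition~\ref{prop linearized} via the classical Dinkelbach fractional-programming argument. Write $N(C):=\E[U(C-Y)\id_{\{C>Y\}}]$ and $D(C):=\E[U(Y-C)\id_{\{C\le Y\}}]$, and let $v^*:=\sup_{C_T\in\mathcal{C}} N(C_T)/D(C_T)$ denote the value of Problem~\eqref{reduced problem}. An immediate but crucial consequence of the standing assumption $1-k<\E[\xi_T Y]$ is that $D(C_T)>0$ for every $C_T\in\mathcal{C}$: otherwise $C_T\ge Y$ almost surely, and the budget constraint would force $1-k\ge\E[\xi_T C_T]\ge\E[\xi_T Y]$, contradicting the assumption. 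Hence the ratio $N/D$ is well-defined on the feasible set, and in particular $v^*\ge 0$.

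Next I would study the scalar map $\lambda\mapsto f(\lambda)$ on $[0,\infty)$. Being the supremum of a family of affine functions of $\lambda$ (one per feasible $C_T$), $f$ is convex and lower semi-continuous, and non-increasing since $D\ge 0$; in particular it is continuous on the interior of its effective domain. One has $f(0)=\sup_C N(C)\ge 0$. The delicate step, which I regard as the main obstacle, is to show that $f(\lambda)<0$ for all sufficiently large $\lambda$. This requires exploiting the budget constraint $\E[\xi_T C_T]\le 1-k$ together with the specific power form $U(x)=x^\gamma$ (e.g.\ via H\"older-type estimates) to control $N(C_T)$ uniformly by a multiple of $D(C_T)$ on the feasible set, thereby forcing $N-\lambda D$ uniformly negative for $\lambda$ large. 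Once this is established, the intermediate value theorem applied to the continuous non-increasing $f$ produces a zero $\lambda^*\ge 0$.

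With existence of $\lambda^*$ in hand, the correspondence with Problem~\eqref{reduced problem} is routine Dinkelbach mechanics. For every feasible $C_T$, the definition of $f$ and $f(\lambda^*)=0$ give $0\ge N(C_T)-\lambda^* D(C_T)$, hence $N(C_T)/D(C_T)\le\lambda^*$, so $v^*\le\lambda^*$. Conversely, if $C_T^*$ attains $f(\lambda^*)=0$, then $N(C_T^*)=\lambda^* D(C_T^*)$, so $N(C_T^*)/D(C_T^*)=\lambda^*\le v^*$. Combining yields $\lambda^*=v^*$, identifies $C_T^*$ as optimal for \eqref{reduced problem}, and in particular determines $\lambda^*$ uniquely as the value function. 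Any putative second zero $\tilde\lambda<\lambda^*$ is ruled out because the feasibility of $C_T^*$ with $D(C_T^*)>0$ yields $f(\tilde\lambda)\ge N(C_T^*)-\tilde\lambda D(C_T^*)=(\lambda^*-\tilde\lambda)D(C_T^*)>0$, while $\tilde\lambda>\lambda^*$ would similarly produce, through an optimizer of the linearized problem at $\tilde\lambda$, a feasible $C$ with $N(C)/D(C)=\tilde\lambda>v^*$, again a contradiction.
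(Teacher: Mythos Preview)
Your framework is the standard Dinkelbach argument and matches the paper's; in particular, your final paragraph---deducing $v^*=\lambda^*$ and optimality of $C_T^*$ from $f(\lambda^*)=0$---is exactly what the paper does.

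The gap is precisely where you flag it. Your pointwise observation $D(C_T)>0$ for every feasible $C_T$ is not sufficient to force $N-\lambda D<0$ uniformly over $\mathcal{C}$; for that you need the \emph{uniform} bounds $\sup_{\mathcal{C}}N<\infty$ and $\inf_{\mathcal{C}}D>0$. The paper does not use H\"older-type estimates. Instead it bounds the two pieces separately: $v_1:=\sup_{\mathcal{C}}N<\infty$ is obtained by invoking a finiteness result for the non-concave problem with stochastic benchmark (Proposition~1 in \cite{liang2021framework}), and $v_2:=\inf_{\mathcal{C}}D>0$ is obtained by invoking an \emph{attainment} result (Theorem~3 in \cite{liang2021framework}) to produce a minimizer $C^*$ of $D$ over $\mathcal{C}$, and then applying your budget-constraint argument to that specific $C^*$. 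This yields $f(\lambda)\le v_1-\lambda v_2$, hence negativity for large $\lambda$. Without attainment, $\inf_{\mathcal{C}}D$ could in principle be $0$ even though each $D(C_T)>0$, so your plan as written does not close the loop; and it is unclear how a H\"older estimate would extract a quantitative multiple relating $N$ to $D$ from $U(x)=x^\gamma$ alone.

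A minor comparison on continuity: rather than convexity, the paper gives the direct Lipschitz bound $|f(\lambda_2)-f(\lambda_1)|\le(\lambda_2-\lambda_1)\,\E[U(Y)]$, using $D(C_T)\le\E[U(Y)]$. This simultaneously handles continuity at the left endpoint and shows $f$ is finite on all of $[0,\infty)$ once $v_1<\infty$---a point your proposal leaves implicit.
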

\begin{proof}
	See Appendix \ref{app1}.
\end{proof}
\begin{remark}
	Note that if $1-k\ge\E[\xi_T Y]$, then one can find a $C_T\in\mathcal{C}$ such that $C_T\ge Y$ a.s., which leads to a zero term in the denominator of Problem \eqref{reduced problem}, and the problem is meaningless. This indicates that the benchmark $Y$ should not be set too small relative to the guarantee level $k$.
\end{remark}
\subsection{Solution to the linearized probelm}
Proposition \ref{prop linearized} suggests that to solve the original problem \eqref{reduced problem}, it is sufficient to find solutions to the linearized problem \eqref{def of f}.
Rewrite Problem \eqref{def of f} as
\begin{equation}\label{redef of f}
f(\lambda) = \sup_{C_T\in\mathcal{C}}~\E\left[u_{\lambda}(C_T,Y)\right],
\end{equation}
where $u_{\lambda}$ is defined as
\begin{equation*}
	u_{\lambda}(c,y)=\left\{
	\begin{aligned}
		&(c-y)^{\gamma}, &&c>y,\\
		&-\lambda(y-c)^{\gamma},&&c\le y.
	\end{aligned}
	\right.
\end{equation*}
To address the non-concave optimization problem  \eqref{redef of f} with a stochastic benchmark, we follow the approach outlined in \cite{liang2021framework} and introduce the following function:
\begin{equation*}
	\mathcal{X}_{\lambda}(z,y)=\mathop{\arg\sup}_{c\ge0}{\big[u_{\lambda}(c,y)-cz\big]}.
\end{equation*}
We perform some computational procedures to obtain the expression for $\mathcal{X}_{\lambda}(z,y)$ as follows:
\begin{equation*}
\mathcal{X}_{\lambda}(z,y)=
\left\{
\begin{aligned}
	&~~0,~&&z>f_{\lambda}(y),\\
	&\left(\frac{z}{\gamma}\right)^{\frac{1}{\gamma-1}}+y,~&&z<f_{\lambda}(y),\\
	&\left\{0,\left(\frac{z}{\gamma}\right)^{\frac{1}{\gamma-1}}+y\right\},~&&z=f_{\lambda}(y),
\end{aligned}
\right.
\end{equation*}
where $f_{\lambda}(y)$ is the unique solution of the following equation:
\begin{equation}\label{eq1}
	(1-\gamma)\left(\frac{x}{\gamma}\right)^{\frac{\gamma}{\gamma-1}}-yx+\lambda y^{\gamma}=0.
\end{equation}
In the following context, unless stated otherwise, we consider $\mathcal{X}_{\lambda}(z,y)=0$ at the point $z=f_{\lambda}(y)$. This assumption reduces $\mathcal{X}_{\lambda}$ to a single-valued function. Drawing on the findings presented in \cite{liang2021framework}, we can solve Problem \eqref{redef of f} based on  the following theorem:
\begin{theorem}\label{thm of solution}
	For every $\lambda\ge0$, Problem \eqref{redef of f} has a unique optimal solution
	\begin{equation*}
		C^*_T=\mathcal{X}_{\lambda}(\nu^*\xi_T,Y)=
		\left[\left(\frac{\nu^*\xi_T}{\gamma}\right)^{\frac{1}{\gamma-1}}+Y\right]\id_{\{\nu^*\xi_T<f_{\lambda}(Y)\}},
	\end{equation*}
	where the real number $\nu^*>0$  solves the budget constraint $\E\[\xi_T\mathcal{X}_{\lambda}(\nu^*\xi_T,Y)\]=1-k$.
\end{theorem}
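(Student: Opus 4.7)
The plan is to execute a classical martingale/Lagrangian argument in which the non-concavity of $u_\lambda(\cdot,Y)$ is absorbed into the pointwise maximizer $\mathcal{X}_\lambda$ already constructed above. For any $\nu>0$ and any feasible $C_T\in\mathcal{C}$, I would begin from the chain
$$\E\big[u_\lambda(C_T,Y)\big] \leq \E\big[u_\lambda(C_T,Y)\big] - \nu\E[\xi_T C_T] + \nu(1-k) \leq \E\big[u_\lambda(\mathcal{X}_\lambda(\nu\xi_T,Y),Y) - \nu\xi_T\mathcal{X}_\lambda(\nu\xi_T,Y)\big] + \nu(1-k),$$
where the first inequality uses the budget constraint $\E[\xi_T C_T]\le 1-k$ and the second uses that $\mathcal{X}_\lambda(\nu\xi_T,Y)$ is by definition the pointwise $\arg\sup$ of $c\mapsto u_\lambda(c,Y)-c(\nu\xi_T)$. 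Both inequalities collapse to equalities exactly when $C_T=\mathcal{X}_\lambda(\nu\xi_T,Y)$ a.s.\ and the budget binds, so optimality reduces to exhibiting a Lagrange multiplier for which the latter happens.

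The central step is therefore to establish existence and uniqueness of a positive $\nu^*$ solving $G(\nu):=\E[\xi_T\mathcal{X}_\lambda(\nu\xi_T,Y)]=1-k$. From
$$\mathcal{X}_\lambda(z,y)=\Big(y+(z/\gamma)^{1/(\gamma-1)}\Big)\id_{\{z<f_\lambda(y)\}},$$
both the power factor (since $1/(\gamma-1)<0$) and the indicator are non-increasing in $z$, so $G$ is non-increasing. Because $\xi_T$ has a continuous (lognormal) law and $f_\lambda(Y)$ is a measurable function of $Y$, one has $\p(\nu\xi_T=f_\lambda(Y))=0$ for every $\nu>0$, upgrading $G$ to be continuous by dominated convergence and strictly decreasing wherever finite. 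Combined with $G(\nu)\to\infty$ as $\nu\to 0^+$ (driven by the blow-up of $(z/\gamma)^{1/(\gamma-1)}$ for small $z$) and $G(\nu)\to 0$ as $\nu\to\infty$, the intermediate value theorem yields the required unique $\nu^*$, and substituting this back into the Lagrangian bound shows that $C_T^*:=\mathcal{X}_\lambda(\nu^*\xi_T,Y)$ attains the supremum.

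Uniqueness of the optimizer then follows because on the full-probability event $\{\nu^*\xi_T\ne f_\lambda(Y)\}$ the pointwise maximizer of $c\mapsto u_\lambda(c,Y)-\nu^*\xi_T c$ is a singleton, so any other optimizer must agree with $C_T^*$ almost surely. The main technical hurdle I anticipate is a clean verification of the continuity and strict monotonicity of $G$ across the jump locus $\{\nu\xi_T=f_\lambda(Y)\}$, together with the integrability estimates needed to justify the limits $\nu\to 0^+$ and $\nu\to\infty$ in the presence of the stochastic benchmark $Y$; smoothness of the law of $\xi_T$ and polynomial bounds on $\mathcal{X}_\lambda(\cdot,Y)$ in $\nu$ should handle both. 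Since the general framework of \cite{liang2021framework} is tailored precisely to non-concave optimization against a stochastic benchmark, the actual proof is likely to reduce to checking the abstract hypotheses of that framework and invoking its main result.
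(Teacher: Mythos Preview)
Your proposal is correct and your final sentence is exactly on target: the paper's proof simply verifies the hypotheses of \cite{liang2021framework}---that $g(\nu):=\E[\xi_T\underline{\mathcal X}_\lambda(\nu\xi_T,Y)]$ satisfies $g(0+)=+\infty$ and $g(+\infty)=0$, and that $\p(\nu\xi_T=f_\lambda(Y))=0$ by the log-normality of $\xi_T$---and then invokes Proposition~1 and Theorem~3 of that reference for finiteness, continuity of $g$, and existence and uniqueness of the optimizer. Your direct Lagrangian chain is a faithful unpacking of what those abstract results encode, and the technical points you flag (continuity across the tie set, the limits in $\nu$) are precisely the ones the framework handles.
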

\begin{proof}
	See Appendix \ref{app1}.
\end{proof}

Upon solving the optimal terminal cushion $C_T^*$, by the martingale method, the optimal cushion $C_t^*$  at time $t$ is also  expressed as follows:
\begin{equation}\label{Ct}
	C^*_t=\frac{1}{\xi_t}\E\left[\xi_T C^*_T\big|\mathcal{F}_t\right]
	=\E\left[Z_t \left(\left(\frac{\nu^*\xi_tZ_t}{\gamma}\right)^{\frac{1}{\gamma-1}}+Y\right)\id_{\{\nu^*\xi_tZ_t<f_{\lambda}(Y)\}}\bigg|\mathcal{F}_t\right].
\end{equation}
Based on results above, we formulate the procedure of solving Problem \eqref{reduced problem} as follows: for every $\lambda\ge0$, we solve Problem \eqref{redef of f} by finding a constant $\nu^*(\lambda)$ depends on $\lambda$ such that $\E\[\xi\mathcal{X}_{\lambda}(\nu^*(\lambda)\xi,Y) \]=1-k$, and then get the value of $f(\lambda)$. Once we find the zero point $\lambda^*$ of $f$, the original Problem \eqref{reduced problem} is also solved by Proposition \ref{prop linearized}. The numerical results are displayed in Section \ref{sec:nume}.

The following proposition is useful when calculating the optimal strategies of Problem \eqref{redef of f}:
\begin{proposition}\label{prop exp}
	Let $Z_t=\frac{\xi_T }{\xi_t}$. Then, for $v\in\mathbb{R}$ and $0\le p\le q$, we have
	\begin{equation*}
		\begin{aligned}
			\E\left[Z_t^v\id_{\{p<Z_t<q\}}\right]=e^{v(T-t)\left[\frac{\theta^2}{2}(v-1)-r\right]}\Bigg[\Phi\left(\left(v-\frac{1}{2}\right)\theta\sqrt{T-t}-\frac{\ln p+r(T-t)}{\theta\sqrt{T-t}}\right)-\\
			\Phi\left(\left(v-\frac{1}{2}\right)\theta\sqrt{T-t}-\frac{\ln q+r(T-t)}{\theta\sqrt{T-t}}\right)\Bigg],
		\end{aligned}
	\end{equation*}
where $\Phi$ is the cumulative distribution function of the standard normal distribution.
\end{proposition}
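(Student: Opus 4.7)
The plan is to reduce the problem to a one-dimensional Gaussian integral. First, I would substitute the explicit form of the pricing kernel to write
$$Z_t = \frac{\xi_T}{\xi_t} = \exp\left\{-r(T-t) - \tfrac{\theta^2}{2}(T-t) - \theta(W_T - W_t)\right\},$$
so that $Z_t$ is a deterministic function of the Brownian increment $W_T - W_t$, which is independent of $\mathcal{F}_t$ with distribution $N(0, T-t)$. Setting $\tau = T-t$ and representing $W_T - W_t = \sqrt{\tau}\, \zeta$ with $\zeta \sim N(0,1)$, I would translate the event $\{p < Z_t < q\}$ into an explicit interval $\{a < \zeta < b\}$, where
$$a = -\frac{\ln q + r\tau + \tfrac{\theta^2}{2}\tau}{\theta\sqrt{\tau}}, \qquad b = -\frac{\ln p + r\tau + \tfrac{\theta^2}{2}\tau}{\theta\sqrt{\tau}}$$
(assuming $\theta>0$; if $\theta<0$ the endpoints swap but the integrand transforms compatibly so the final formula is unchanged).

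Next, I would express the expectation as
$$\E\left[Z_t^v \id_{\{p<Z_t<q\}}\right] = \int_a^b \exp\left\{-vr\tau - v\tfrac{\theta^2}{2}\tau - v\theta\sqrt{\tau}\,z\right\}\frac{1}{\sqrt{2\pi}} e^{-z^2/2}\,dz,$$
and complete the square in the exponent using
$$-\tfrac{1}{2}z^2 - v\theta\sqrt{\tau}\,z = -\tfrac{1}{2}(z + v\theta\sqrt{\tau})^2 + \tfrac{v^2\theta^2\tau}{2}.$$
The constant pulled out of the integral is then $\exp\{v\tau[\tfrac{\theta^2}{2}(v-1) - r]\}$, which matches the prefactor in the statement. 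The substitution $u = z + v\theta\sqrt{\tau}$ converts the remaining integral into $\Phi(b + v\theta\sqrt{\tau}) - \Phi(a + v\theta\sqrt{\tau})$.

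Finally, I would verify by direct algebraic simplification that
$$b + v\theta\sqrt{\tau} = \left(v - \tfrac{1}{2}\right)\theta\sqrt{\tau} - \frac{\ln p + r\tau}{\theta\sqrt{\tau}},$$
and analogously for the $a$ endpoint with $q$ in place of $p$, which recovers the arguments of $\Phi$ in the stated formula. There is no real obstacle here; the argument is essentially a bookkeeping exercise in Gaussian moment generating functions combined with careful tracking of the inequality's translation into the standard normal variable. The only step requiring minor care is keeping signs correct in the endpoints when rewriting $\{p < Z_t < q\}$ in terms of $\zeta$, since $Z_t$ is a decreasing function of the Brownian increment when $\theta > 0$.
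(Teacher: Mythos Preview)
Your proposal is correct and follows exactly the standard Gaussian computation one would expect; the paper itself simply states that the proof is ``straightforward calculus'' and omits it, so your write-up in fact supplies what the paper leaves implicit.
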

\begin{proof}
	The proof involves straightforward calculus computations and is therefore omitted for brevity.
\end{proof}	
In the remainder of this section, our objective is to solve Problem \eqref{redef of f} and derive the expression for the optimal multiplier. To simplify the formulas, we will utilize the symbol $\phi$ to represent the probability density function of  $\Phi$. Additionally, we introduce the following notations:
\begin{equation*}
	\begin{aligned}
		H_t(x,v)\triangleq\Phi\left(\left(v-\frac{1}{2}\right)\theta\sqrt{T-t}-\frac{\ln x+r(T-t)}{\theta\sqrt{T-t}}\right),\\
		G_t(x,v)\triangleq\phi\left(\left(v-\frac{1}{2}\right)\theta\sqrt{T-t}-\frac{\ln x+r(T-t)}{\theta\sqrt{T-t}}\right).
	\end{aligned}
\end{equation*}

Based on Definition \eqref{def of Y} of $Y$, we rewrite $Y$ as
\begin{equation*}
	Y=\eta\left(a\xi_T^b-ke^{rT}\right)\id_{\{\xi_T<c\}}\triangleq g(\xi_T),
\end{equation*}
where
\begin{equation*}
	a=\text{exp}\left\{\frac{T}{2}(\mu+r)\left(1-\frac{\sigma^2}{\mu-r}\right)\right\},~b=-\frac{\sigma^2}{\mu-r}<0,~c=\left(\frac{ke^{rT}}{a}\right)^{\frac{1}{b}}.
\end{equation*}
To deal with the term $\nu^*\xi_T<f_\lambda(Y)$ in \eqref{Ct}, we apply substitution method and let $x=\gamma(sy)^{\gamma-1}$ in \eqref{eq1}, which leads to
\begin{equation*}
	(1-\gamma)y^{\gamma}s^{\gamma}-\gamma y^{\gamma}s^{\gamma-1}+\lambda y^{\gamma}=0.
\end{equation*}
Therefore, $s$ satisfies
\begin{equation*}
	(1-\gamma)s^{\gamma}-\gamma s^{\gamma-1}+\lambda =0,
\end{equation*}
which is an equation that dose not dependent on $y$, and we have $f_\lambda(y)=dy^{\gamma-1}$ with $d=\gamma s^{\gamma-1}$ being a constant.
Therefore,
$\nu^*\xi_T<f_\lambda(Y)$ is equivalent to
\begin{equation*}
	\left(\xi_T\ge c\right) \text{ or }
	\left(\xi_T< c \text{ and }
	\nu^*\xi_T<d\eta^{\gamma-1}\left(a\xi_T^b-ke^{rT}\right)^{\gamma-1}\right).
\end{equation*}
To calculate the second part {in the condition above}, we define
$h(z)=d_1z^{\frac{1}{\gamma-1}}-az^b+ke^{rT}$ with
$d_1=\frac{1}{\eta}\left(\frac{\nu^*}{d}\right)^{\frac{1}{\gamma-1}}$, {and then the condition can be rewritten as $h(\xi_T)>0$.
We have the following three cases of the function $h(\cdot)$.}

\begin{enumerate}
	\item If $1-\gamma>\frac{\mu-r}{\sigma^2}$.
	In this case, we list properties of $h(\cdot)$ after trivial calculations (see Figure \ref{fig1} for instance):
\begin{equation*}
	h(0+)=-\infty,~h'(0+)=+\infty, ~h(+\infty)>0, \text{ and }
	h'(z)<(\text{ or}>)0 \text{ for } z>(\text{ or}<)z_1,
\end{equation*}
where $z_1$ is a positive real number.

\begin{figure}[htbp]
	\centering
	\begin{minipage}{0.5\textwidth}
		\centering
		\includegraphics[totalheight=6cm]{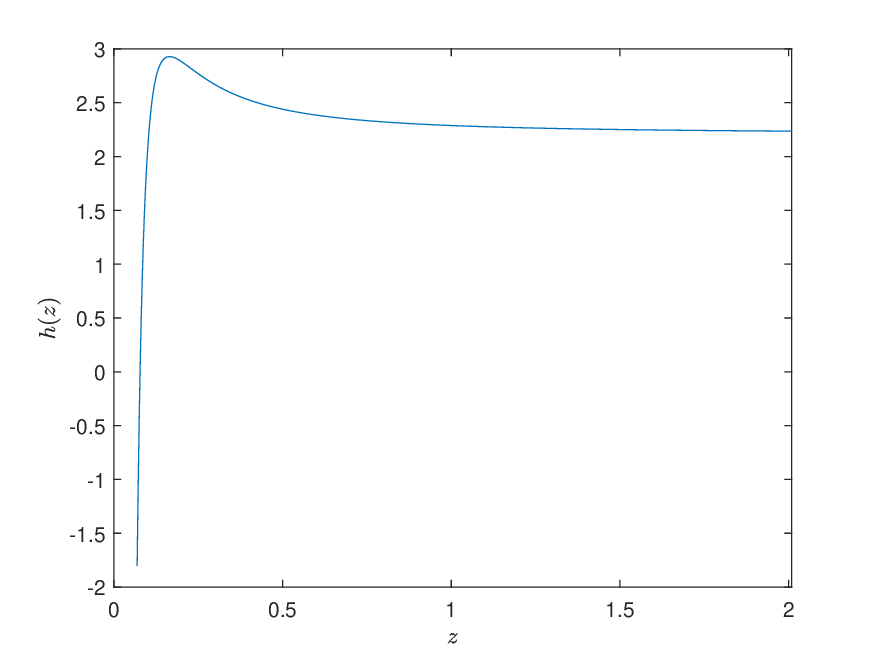}
		\caption{\mbox{Example graph of $h(z)$ when $1-\gamma>\frac{\mu-r}{\sigma^2}$.}}
		\label{fig1}
	\end{minipage}\hfill
\end{figure}
Based on above results, $h(z)>0$ leads to $z>v_1$ for some positive real number $v_1$. 
Therefore, $\nu^*\xi_T<f_\lambda(Y)$ is equivalent to $\xi_T > v_1$, and Formula \eqref{Ct} becomes
\begin{equation}\label{Ct2}
	C^*_t
	=\E\left[Z_t \left(\left(\frac{\nu^*\xi_tZ_t}{\gamma}\right)^{\frac{1}{\gamma-1}}+g(\xi_tZ_t)\right)\id_{\{\xi_tZ_t>v_1\}}\bigg|\mathcal{F}_t\right]=n(\xi_t,t),
\end{equation}
where
\begin{equation*}
\begin{aligned}
\!\!\!	n(z,t)=&\E\left[Z_t \left(\left(\frac{\nu^*zZ_t}{\gamma}\right)^{\frac{1}{\gamma-1}}+g(zZ_t)\right)\id_{\{zZ_t>v_1\}}\right]\\
	=&\left(\frac{\gamma}{\nu^*}\right)^{\frac{1}{1-\gamma}}z^\frac{1}{\gamma-1}\E\left[Z_t^{\frac{\gamma}{\gamma-1}}\id_{\{Z_t>\frac{v_1}{z}\}}\right]\\
	& +a\eta z^{b}\E\left[Z_t^{b+1}\id_{\{\frac{v_1}{z}<Z_t<\frac{c}{z}\}}\right]\!-\!k\eta e^{rT}\E\left[Z_t\id_{\{\frac{v_1}{z}<Z_t<\frac{c}{z}\}}\right].
\end{aligned}
\end{equation*}

Using Proposition \ref{prop exp} yields
\begin{equation}\label{n1}
	\begin{aligned}
	n(z,t) =& D_1z^\frac{1}{\gamma-1}H_t\(\frac{v_1}{z},\frac{\gamma}{\gamma-1}\)
	+D_2z^b\left[H_t\(\frac{v_1}{z},b+1\)-H_t\(\frac{c}{z},b+1\)\right]\\
	&-k\eta e^{rt}\left[H_t\(\frac{v_1}{z},1\)-H_t\(\frac{c}{z},1\)\right],
	\end{aligned}
\end{equation}
where $D_1=\left(\frac{\gamma}{\nu^*}\right)^{\frac{1}{1-\gamma}}e^{\frac{\gamma}{\gamma-1}(T-t)\left(\frac{\theta^2}{2\gamma-2}-r\right)}$,
$D_2=a\eta e^{-\frac{\mu+r}{2}(b+1)(T-t)}$.

To derive the optimal strategy $\{m_t^*,0\le t\le T\}$ from the optimal $C_t^*$ for Problem \eqref{redef of f}, using It\^{o}'s formula we obtain
$$
\d C_t^*=\frac{\partial n}{\partial z}(\xi_t,t)\d \xi_t
+\frac{\partial n}{\partial t}(\xi_t,t)\d t
+\frac{\theta^2}{2}\frac{\partial^2 n}{\partial z^2}(\xi_t,t)\xi_t^2\d t.
$$
Comparing the last equation with 
Eq.\eqref{eq of C}, we have
\begin{equation}\label{eq m*}
m^*_t = -\frac{\theta\xi_t}{\sigma C_t^*}\frac{\partial n}{\partial z}(\xi_t,t)
=-\frac{\theta}{\sigma C^*_t}M(\xi_t,t),	
\end{equation}

where $M(z,t)=z\frac{\partial n}{\partial z}(z,t)$ and has the following expression in this case:
\begin{equation}\label{mt1}
	\begin{aligned}
		M(z,t)
		=&\frac{D_1}{\gamma-1}z^\frac{1}{\gamma-1}H_t\(\frac{v_1}{z},\frac{\gamma}{\gamma-1}\)
		+bD_2z^{b}\left[H_t\(\frac{v_1}{z},b+1\)-H_t\(\frac{c}{z},b+1\)\right]\\	
		&+\frac{D_1z^\frac{1}{\gamma-1}}{\theta\sqrt{T-t}}G_t\(\frac{v_1}{z},\frac{\gamma}{\gamma-1}\)
		+\frac{D_2z^{b}}{\theta\sqrt{T-t}}\left[G_t\(\frac{v_1}{z},b+1\)-G_t\(\frac{c}{z},b+1\)\right]\\
		&-\frac{k\eta e^{rt}}{\theta\sqrt{T-t}}\left[G_t\(\frac{v_1}{z},1\)-G_t\(\frac{c}{z},1\)\right].
	\end{aligned}
\end{equation}
\item If $1-\gamma<\frac{\mu-r}{\sigma^2}$.
In this case, we can verify the following properties of $h$ (see Figures \ref{fig2} \& \ref{fig3} for examples):
\begin{equation*}
	h(0+)=+\infty,~h'(0+)=-\infty, ~h(+\infty)>0, \text{ and }
	h'(z)>0 \text{ for } z>z_2,
\end{equation*}
{where $z_2$ is a positive real number.} 

Based on above properties of $h$, the condition $\nu^*\xi_T<f_\lambda(Y)$ finally leads to two more cases:
\begin{enumerate}[(i)]
	\item $\xi_T>0$;
	\item $\xi_T\in(0,v_2)\cup(v_3,+\infty)$, {where $v_2$ and $v_3$ are some real numbers satisfying $0<v_2<v_3<c$.}
\end{enumerate}
\begin{figure}[htbp]
	\centering
	\begin{minipage}{0.5\textwidth}
		\centering
		\includegraphics[totalheight=6cm]{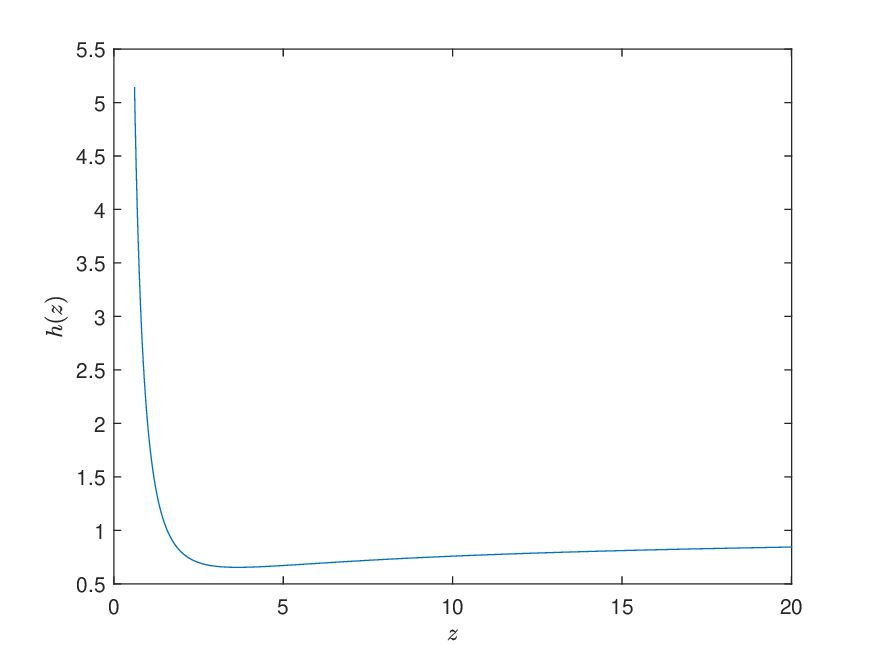}
		\caption{Example graph of $h$ for Case 2-(i).}
		\label{fig2}
	\end{minipage}\hfill
	\begin{minipage}{0.5\textwidth}
	\centering
	\includegraphics[totalheight=6cm]{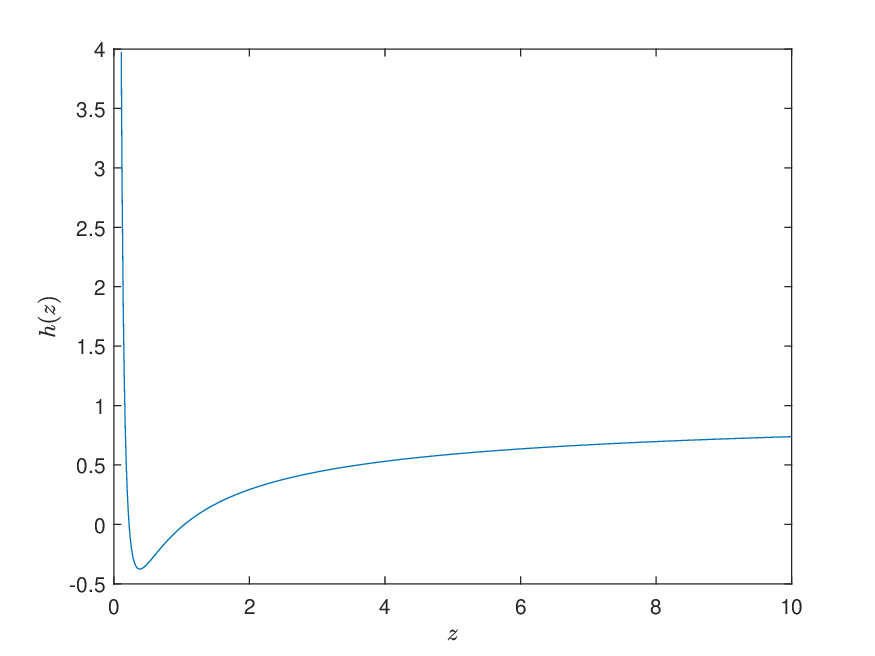}
	\caption{Example graph of $h$ for Case 2-(ii).}
	\label{fig3}
\end{minipage}\hfill
\end{figure}
Using Proposition \ref{prop exp}, we can find out the expression of $C_t^*=n(\xi_t,t)$ and $m^*_t
=-\frac{\theta}{\sigma C^*_t}M(\xi_t,t)$.

For Case 2-(i), $n$ and $M$ can be obtained by letting $v_1\rightarrow0^+$ in \eqref{n1} and \eqref{mt1}, and we have
\begin{equation}\label{n2}
	\begin{aligned}
	n(z,t) = D_1z^\frac{1}{\gamma-1}
+D_2z^b\left[1-H_t\(\frac{c}{z},b+1\)\right]
-k\eta e^{rt}\left[1-H_t\(\frac{c}{z},1\)\right],
	\end{aligned}
\end{equation}
\begin{equation}\label{mt2}
\begin{aligned}
	M(z,t)
	=&\frac{D_1}{\gamma-1}z^\frac{1}{\gamma-1}
	+bD_2z^{b}\left[1-H_t\(\frac{c}{z},b+1\)\right]	
	-\frac{D_2z^{b}}{\theta\sqrt{T-t}}G_t\(\frac{c}{z},b+1\)\\
	&+\frac{k\eta e^{rt}}{\theta\sqrt{T-t}}G_t\(\frac{c}{z},1\).
\end{aligned}
\end{equation}
For Case 2-(ii), we have
\begin{equation*}
\begin{aligned}
n(z,t)
=&D_1z^\frac{1}{\gamma-1}\left[H_t\(\frac{v_3}{z},\frac{\gamma}{\gamma-1}\)-H_t\(\frac{v_2}{z},\frac{\gamma}{\gamma-1}\)+1\right]\\
&+D_2 z^{b}\left[H_t\(\frac{v_3}{z},b+1\)-H_t\(\frac{c}{z},b+1\)
-H_t\(\frac{v_2}{z},b+1\)+1\right]\\
&-k\eta e^{rt}\left[H_t\(\frac{v_3}{z},1\)-H_t\(\frac{c}{z},1\)
-H_t\(\frac{v_2}{z},1\)+1\right],
\end{aligned}
\end{equation*}
\begin{equation}\label{mt3}
	\begin{aligned}
		M(z,t)
		=&\frac{D_1z^\frac{1}{\gamma-1}}{\gamma-1}\left[H_t\(\frac{v_3}{z},\frac{\gamma}{\gamma-1}\)-H_t\(\frac{v_2}{z},\frac{\gamma}{\gamma-1}\)+1\right]\\
		&+bD_2 z^{b}\left[H_t\(\frac{v_3}{z},b+1\)-H_t\(\frac{c}{z},b+1\)
		-H_t\(\frac{v_2}{z},b+1\)+1\right]\\
		&+\frac{D_1z^\frac{1}{\gamma-1}}{\theta\sqrt{T-t}}
		\left[G_t\(\frac{v_3}{z},\frac{\gamma}{\gamma-1}\)-G_t\(\frac{v_2}{z},\frac{\gamma}{\gamma-1}\)\right]\\
		&+\frac{D_2 z^{b}}{\theta\sqrt{T-t}}
		\left[G_t\(\frac{v_3}{z},b+1\)-G_t\(\frac{c}{z},b+1\)
		-G_t\(\frac{v_2}{z},b+1\)\right]\\
		&-\frac{k\eta e^{rt}}{\theta\sqrt{T-t}}
		\left[G_t\(\frac{v_3}{z},1\)-G_t\(\frac{c}{z},1\)
		-G_t\(\frac{v_2}{z},1\)\right].
	\end{aligned}
\end{equation}
\item If $1-\gamma=\frac{\mu-r}{\sigma^2}$,
then $h(z)=(d_1-a)z^b+ke^{rT}$, and $\nu^*\xi_T<f_\lambda(Y)$ finally leads to $\xi_T>v_4$ for some constant $v_4\in[0,c)$. The case reduces to Case 1.
\end{enumerate}

Summarizing the above derivations, we have the following theorem for the optimal risk multiplier.
\begin{theorem}
The optimal risk multiplier is given by 
$$
m^*_t =-\frac{\theta}{\sigma C^*_t}M(\xi_t,t),
$$
where $ C^*_t=n(\xi_t,t)$ and 
\begin{enumerate}
	\item when $1-\gamma\ge\frac{\mu-r}{\sigma^2}$, { $n(z,t)$ and $M(z,t)$ are given by \eqref{n1} and \eqref{mt1}}, respectively.
	\item when $1-\gamma<\frac{\mu-r}{\sigma^2}$, { $n(z,t)$ and $M(z,t)$ are given by \eqref{n2}--\eqref{mt3}}, repsectively.
\end{enumerate}
\end{theorem}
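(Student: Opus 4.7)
The plan is to assemble the theorem by combining three ingredients already developed in the preceding subsection: the characterization of the optimal terminal cushion $C_T^*$ from Theorem~\ref{thm of solution}, the martingale identity $C_t^*=\frac{1}{\xi_t}\E[\xi_TC_T^*\mid\mathcal{F}_t]$, and a diffusion-coefficient matching via It\^o's formula against the cushion SDE \eqref{eq of C}. The theorem itself is essentially a bookkeeping statement that collates the case-by-case formulas already derived; what remains is to organize and justify the case dichotomy and verify the final comparison that extracts $m_t^*$.

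First, substituting $Y=g(\xi_T)$ into the indicator appearing in $C_T^*$ reduces the event $\{\nu^*\xi_T<f_\lambda(Y)\}$ to a condition on $\xi_T$ alone. The substitution $x=\gamma(sy)^{\gamma-1}$ in \eqref{eq1} yields $f_\lambda(y)=dy^{\gamma-1}$ for a constant $d$, so the event becomes $\{h(\xi_T)>0\}$ with $h(z)=d_1z^{1/(\gamma-1)}-az^b+ke^{rT}$. A direct analysis of $h$—its limits at $0^+$ and $+\infty$, together with the sign of $h'$—shows that its qualitative shape is governed entirely by the comparison of the exponents $1/(\gamma-1)$ and $b=-\sigma^2/(\mu-r)$; equivalently, by whether $1-\gamma$ is greater than, less than, or equal to $\frac{\mu-r}{\sigma^2}$. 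In the first regime, $\{h>0\}$ is a half-line $(v_1,\infty)$; in the second, it is either $(0,\infty)$ (Case 2-(i)) or a union $(0,v_2)\cup(v_3,\infty)$ (Case 2-(ii)), according to whether the unique local minimum of $h$ lies above or below zero; the knife-edge case collapses to the first regime. This is precisely the trichotomy stated in the theorem.

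Next, for each resulting form of the event, I would compute $C_t^*=n(\xi_t,t)$ by exploiting that $Z_t=\xi_T/\xi_t$ is independent of $\mathcal{F}_t$ and lognormal. The conditional expectation decomposes into a sum of terms of the form $\E[Z_t^v\id_{\{p<Z_t<q\}}]$, each of which is evaluated in closed form by Proposition~\ref{prop exp}, producing the $H_t$-expressions \eqref{n1}, \eqref{n2}, and the analogue for Case 2-(ii). Differentiating $n(z,t)$ in $z$ and multiplying by $z$ gives $M(z,t)$; the $H_t$-terms contribute the $H_t$-parts of \eqref{mt1}--\eqref{mt3} via the power-function factors, while the $G_t$-terms come from differentiating $H_t$ through the $\ln x$ inside its $\Phi$ argument.

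Finally, to obtain $m_t^*$, apply It\^o's formula to $C_t^*=n(\xi_t,t)$. Since $d\xi_t=-r\xi_t\,dt-\theta\xi_t\,dW_t$, the diffusion coefficient of $dC_t^*$ is $-\theta\xi_t\,\partial_zn(\xi_t,t)\,dW_t=-\theta M(\xi_t,t)\,dW_t$. Matching this against the diffusion coefficient $m_t^*\sigma C_t^*\,dW_t$ in \eqref{eq of C} yields
$$m_t^*=-\frac{\theta}{\sigma C_t^*}M(\xi_t,t),$$
which is the claimed formula. The main obstacle, technically speaking, is verifying the qualitative picture of $h$ in each regime—in particular checking that under the budget-determined constant $\nu^*$ the local minimum of $h$ in the second regime truly admits the two subcases and that the roots $v_1,v_2,v_3$ satisfy the asserted ordering $0<v_2<v_3<c$; once these geometric facts about $h$ are in hand, the rest of the argument is a direct assembly of Theorem~\ref{thm of solution}, Proposition~\ref{prop exp}, and the It\^o comparison.
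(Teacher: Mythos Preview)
Your proposal is correct and follows essentially the same approach as the paper: the theorem is stated immediately after the derivations as a summary, and those derivations proceed exactly as you outline—reducing the indicator event via $f_\lambda(y)=dy^{\gamma-1}$ to a sign condition on $h$, performing the case split on $1-\gamma$ versus $\frac{\mu-r}{\sigma^2}$ through the asymptotics and monotonicity of $h$, evaluating the conditional expectation with Proposition~\ref{prop exp}, and matching diffusion coefficients via It\^o's formula against \eqref{eq of C}. The only minor refinement is that the event $\{\nu^*\xi_T<f_\lambda(Y)\}$ is strictly $\{\xi_T\ge c\}\cup\{\xi_T<c,\,h(\xi_T)>0\}$ rather than simply $\{h(\xi_T)>0\}$, but since $h(c)>0$ the two coincide after the case analysis, so your summary is accurate.
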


Recall that we have established an optimal solution $C_T^*$ for Problem \eqref{redef of f} for every $\lambda\ge0$. By utilizing numerical methods, we can determine the zero point $\lambda^*$ of the function $f(\cdot)$. After obtaining $\lambda^*$, we  determine the corresponding category in which it falls and employ the related expressions for $n$ and $M$ to derive the optimal multiplier process $\{m_t^*, 0 \leq t \leq T\}$. The numerical results will be presented in the following section.


%

\section{\bf Numerical results}\label{sec:nume}

In this section, we present the numerical results for Problem \eqref{Omega ratio}. For the market parameter settings, we consider $\mu=0.1435$ and $\sigma=0.17$, which are estimated using the average return of the S\&P 500 Index from 2012 to 2021. Additionally, we set $r=0.0088$ based on the US 10-year Treasury Yield during the same period. Regarding the model settings, we choose $T=5$ years, $k=0.9$, $\eta=0.7$, and $\gamma=0.5$.

We employ the Monte Carlo method to simulate both the market price, denoted as $\{S_t, 0\le t\le T\}$, and the cushion, denoted as $\{C_t, 0\le t\le T\}$, under various PPI strategies. These strategies involve fixed values of the risk multiplier $m$ selected from the set $\{2, 3, 4, 5, 6, 8, 10\}$, in addition to the optimal risk multiplier determined in our research. Furthermore, we consider the specific choice of $m=\frac{\mu-r}{(1-\gamma)\sigma^2}=9.32$, which corresponds to the optimal strategy for a CRRA utility function $u(x)=x^\gamma$. Our simulations are conducted with the assumption of 260 trading days per year. In alignment with the methodology described in \cite{zieling2014performance}, we introduce a borrowing constraint by implementing the condition $m_tC_t\le2V_t$ in our simulations. This condition ensures that whenever $m_tC_t$ exceeds $2V_t$, we restrict $m_t$ to the level of $\frac{2V_t}{C_t}$. To maintain the stability of the simulation process, particularly when the term $C_t^*$ in the denominator of \eqref{eq m*} approaches zero, we constrain the values of $m^*_t$ to fall within the interval $[0, 20]$.

\subsection{\bf Comparisons with CPPI strategies}
\begin{figure}[htbp]
	\centering
	\begin{minipage}{0.5\textwidth}
		\centering
		\includegraphics[totalheight=5.9cm]{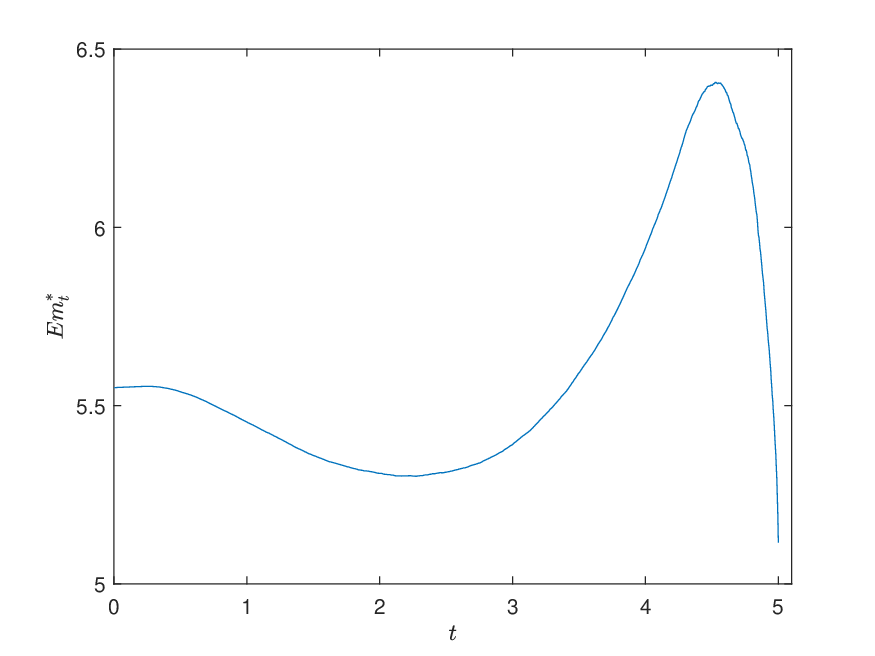}
		\caption{Mean of $m^*_t$.}
		\label{fig5}
	\end{minipage}\hfill
	\begin{minipage}{0.5\textwidth}
		\centering
		\includegraphics[totalheight=5.9cm]{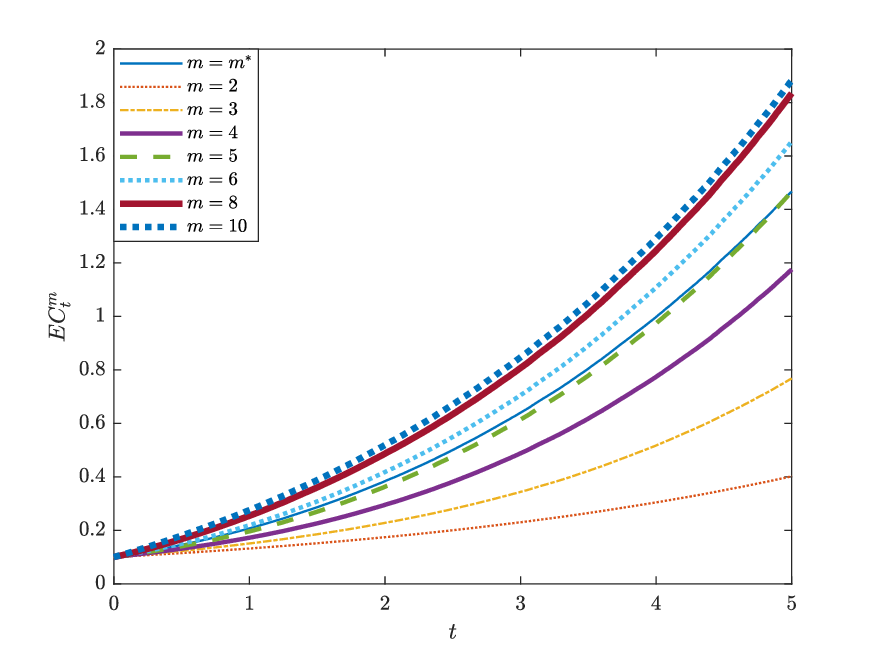}
		\caption{Mean of $C^m_t$ under different $m$.}
		\label{fig4}
	\end{minipage}\hfill
\end{figure}

In Figure \ref{fig5}, the mean of $m^*_t$ exhibits a distinct hump-shaped pattern. Similar to the outcomes observed under utility functions characterized by loss aversion, investors tend to increase their allocation to the risky asset as they approach the maturity date. This strategic shift is aimed at mitigating potential losses and capitalizing on the prospects of higher returns through a sort of investment ``gambling". Specifically, in the baseline model where the risky asset performs well, this shift towards gambling behavior is relatively gradual. Consequently, it results in the optimal $m^*_t$ falling within the range of $[5, 6.5]$. Notably, the optimal $m^*_t$ remains below the value of $m=\frac{\mu-r}{(1-\gamma)\sigma^2}=9.32$. This discrepancy can be attributed to the influence of loss aversion, which encourages a more cautious approach with lower leverage.

In Figure \ref{fig4}, we conduct a comparison of the mean values of $C^m_t$ across different $m$ settings.  The results reveal a positive correlation between the mean of $C^m_t$ and the parameter $m$. Furthermore, the optimal excess value exhibits a gradual increase over time. This trend can be attributed to the performance of the risky asset, where higher leverage results in greater returns. However, it is important to note that employing higher leverage also introduces higher downside risk. As a result, the optimal case based on the mean criterion may not necessarily align with achieving the highest utility when considering the Omega ratio.

To conduct a comprehensive comparison of performance under different $m$ values, we calculate several performance indicators for each $m$ as follows:
\begin{itemize}
	\item $E_1$: the expected utility of the numerator in the  extended Omega ratio \eqref{Omega ratio}
	\item $E_2$: the expected utility of the denominator in the extended  Omega ratio \eqref{Omega ratio}
	\item {$OR$}: the extended Omega ratio $\frac{E1}{E2}$
	\item $E[C_T]$: the expectation of the terminal cushion $C_T$
	\item $P_{liq}$: the probability of liquidation (i.e., $C_T<0$)
	\item $P$: the probability that the terminal cushion $C_T$ is less than the benchmark $Y$
\end{itemize}
Furthermore, for each constant $m$, we calculate the winning rate  of $m^*$ against the constant $m$, which represents the probability that the terminal cushion $C_T^m$ under the constant strategy $m$ is smaller than $C_T^{m^*}$. We denote this winning rate as $WR$.
\begin{longtable}{cccccccc}
	\caption{Performance results for $m^*$ and constant multiplier strategies}\\
	\hline
	 & $E_1$ & $E_2$ & $OR$ & $P_{liq}$ & $P$ & $WR$ & $E[C_T]$\\
	\hline
	\endfirsthead
	\multicolumn{8}{c}%
	{\tablename\ \thetable\ -- \textit{Continued from previous page}} \\
	\hline
	$m$ & $E_1$ & $E_2$ & $OR$ & $P_{liq}$ & $P$ & $WR$ & $EC_T$ \\
	\hline
	\endhead
	\hline \multicolumn{8}{r}{\textit{Continued on next page}} \\
	\endfoot
	\hline
	\endlastfoot
	$m^*$ & 0.505 & 0.123 & 4.11 & 0.0 & 63.3\% & - & 1.470\\
	2 & 0.014 & 0.568 & 0.02 & 0.0 & 66.8\% & 75.7\% & 0.402\\
	3 & 0.132 & 0.335 & 0.39 & 0.0 & 21.0\% & 75.4\% & 0.768\\
	4 & 0.367 & 0.230 & 1.60 & 0.0 & 40.8\% & 70.4\% & 1.175\\
	5 & 0.519 & 0.204 & 2.54 & 0.0 & 50.0\% & 44.6\% & 1.468\\
	6 & 0.605 & 0.203 & 2.98 & 0.0 & 53.6\% & 29.2\% & 1.655\\
	8 & 0.677 & 0.226 & 3.00 & 0.0 & 54.5\% & 31.4\% & 1.836\\
	10 & 0.692 & 0.261 & 2.65 & 0.0 & 52.6\% & 45.6\% & 1.884\\
	$\frac{\mu-r}{(1-\gamma)\sigma^2}$ & 0.691 & 0.249 & 2.78 & 0.0 & 53.4\% & 39.5\% & 1.877\\
	\hline
\end{longtable}

The findings highlight several notable trends. First, there is a positive correlation between the expected utility of the numerator ($E_1$) in the Omega ratio and the parameter $m$, indicating that higher leverage results in increased returns. Conversely, the expected utility of the denominator ($E_2$) in the Omega ratio exhibits an interesting pattern. Initially, it decreases as $m$ is enlarged, but then it begins to increase. The initial decrease in $E_2$ can be attributed to the reduction of insufficient leverage, which hinders effective tracking of the benchmark. The subsequent increase in $E_2$ is a consequence of excessive leverage, which can lead to higher downside risk.

Remarkably, the expected utility of the denominator in the Omega ratio is the smallest under the optimal value of $m^*$, and the probability of the terminal excess value $C_T$ falling below the benchmark $Y$ is the second highest in this case. These seemingly contradictory results can be attributed to the strategy designed to maximize the Omega ratio. Initially, investors select a moderate value for $m^*$. As the maturity date approaches, the performance of the risky asset and the high excess value of the fund incentive  investors to take on some level of risk, positively impacting the tracking of the upward benchmark. However, there is a higher likelihood of falling below the benchmark in the latter stages due to the larger leverage. Nevertheless, these later strategies can be effectively and timely adjusted, reducing the associated gap and resulting in relatively small negative utility. Furthermore, several other observations can be made. The Omega ratio is maximized at the optimal value of $m^*$, and the minimum guarantee is fully protected in all cases. Although the ideal winning rate is not achieved, the {strategy $m^*$} maximizes the Omega ratio by closely tracking the benchmarks and minimizing the gaps.

\begin{figure}[htbp]
	\centering
	\begin{minipage}{0.5\textwidth}
		\centering
		\includegraphics[totalheight=6cm]{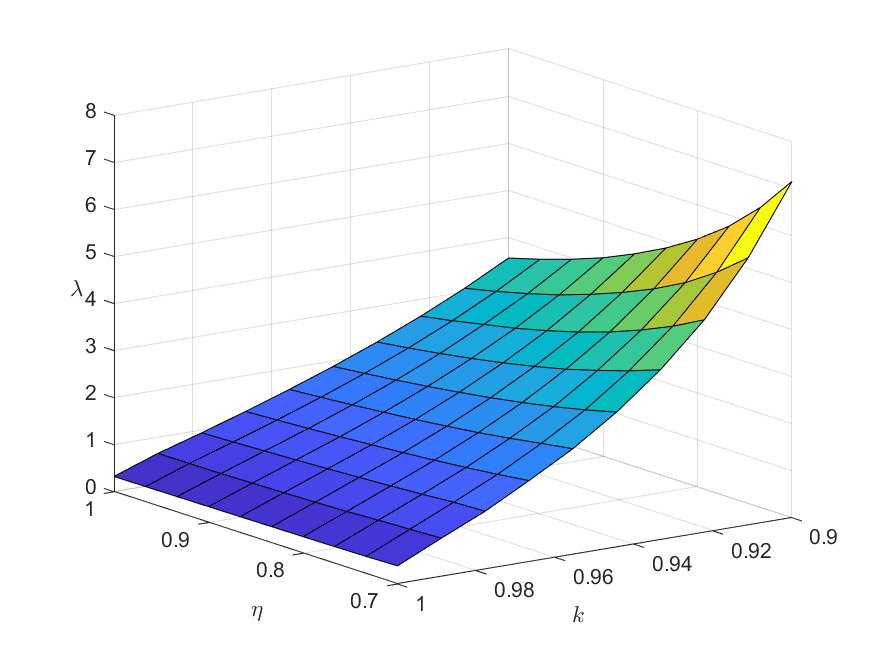}
		\caption{Theoretical optimal Omega ratio.}
		\label{fig6}
	\end{minipage}\hfill
	\begin{minipage}{0.5\textwidth}
		\centering
		\includegraphics[totalheight=6cm]{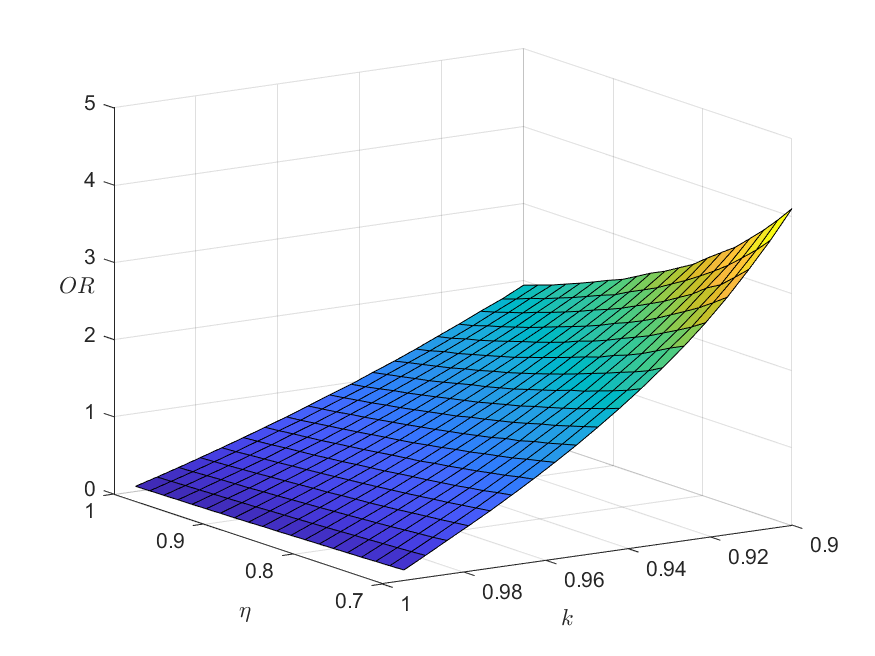}
		\caption{Realized Omega ratio.}
		\label{fig7}
	\end{minipage}
\end{figure}

Additionally, we have generated a plot illustrating the Omega ratio as it varies with different values of $\eta$ and $k$. It is important to note that in Proposition \ref{prop linearized}, a requirement is set that $1-k<\E[\xi_T Y]$, which is equivalent to $0.2\eta + k>1$ given the current parameter settings. Therefore, we have selected the intervals $k\in[0.9,1]$ and $\eta\in[0.7,1]$ as the target range where this inequality automatically holds. As our simulations are conducted in a discrete setting, the realized Omega ratio {$OR$} (Figure \ref{fig7}) may not perfectly align with the theoretical optimal Omega ratio $\lambda$ (Figure \ref{fig6}). Nevertheless, both figures exhibit a similar overall trend. It is evident that the Omega ratio exhibits a negative correlation with both $k$ and $\eta$. A higher value of $k$ increases the challenge of safeguarding the minimum guarantee and limits investment flexibility. Likewise, a larger value of $\eta$ heightens the difficulty of tracking the upward benchmark and reduces the capacity for returns.
\begin{figure}[htbp]
	\centering
	\begin{minipage}{0.5\textwidth}
		\centering
		\includegraphics[totalheight=7cm]{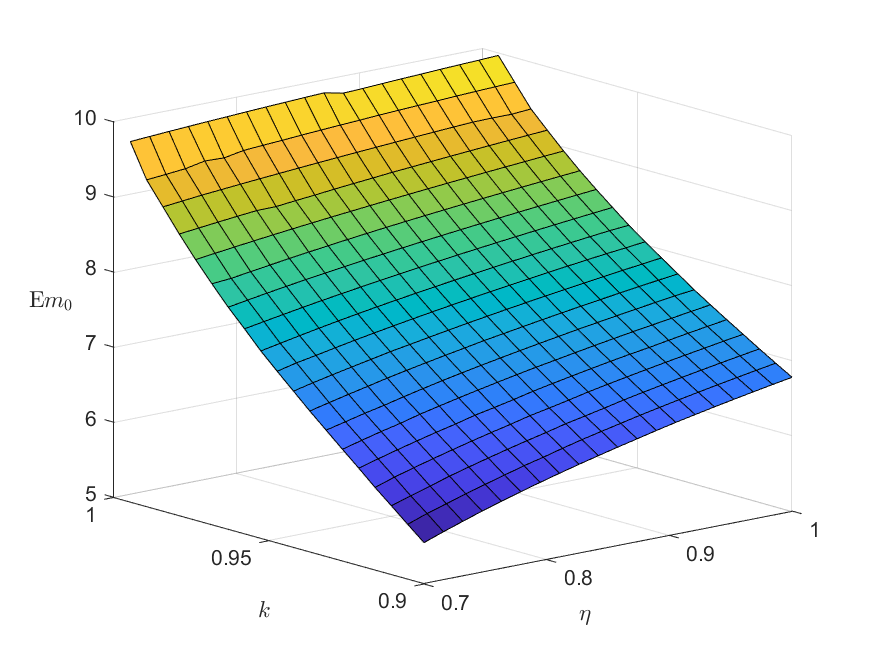}
		\caption{$\E [m_0]$.}
		\label{fig8}
	\end{minipage}\hfill	
\end{figure}

Furthermore, we depict the relationship between $\E [m_0]$ and the parameters $(\eta, k)$ in Figure \ref{fig8}. Notably, as $k$ increases, the excess value $C$ decreases in response. This reduction in excess value results in reduced investment flexibility, necessitating an increase in the expected risky proportion $\E [m_0]$ to counter the adverse effects. Similarly, when $\eta$ takes on larger values, it becomes more challenging to effectively track the upward benchmark. Consequently, the expected risky proportion $\E[ m_0]$ must also be elevated to achieve better tracking performance. In summary, the risky multiplier is positively correlated with both $k$ and $\eta$.

\subsection{\bf Sensitivity test}


\begin{figure}[htbp]
	\centering
	\subfloat[Relationshiop between $\E m_0$ and $\mu$.]{\includegraphics[width=0.5\textwidth]{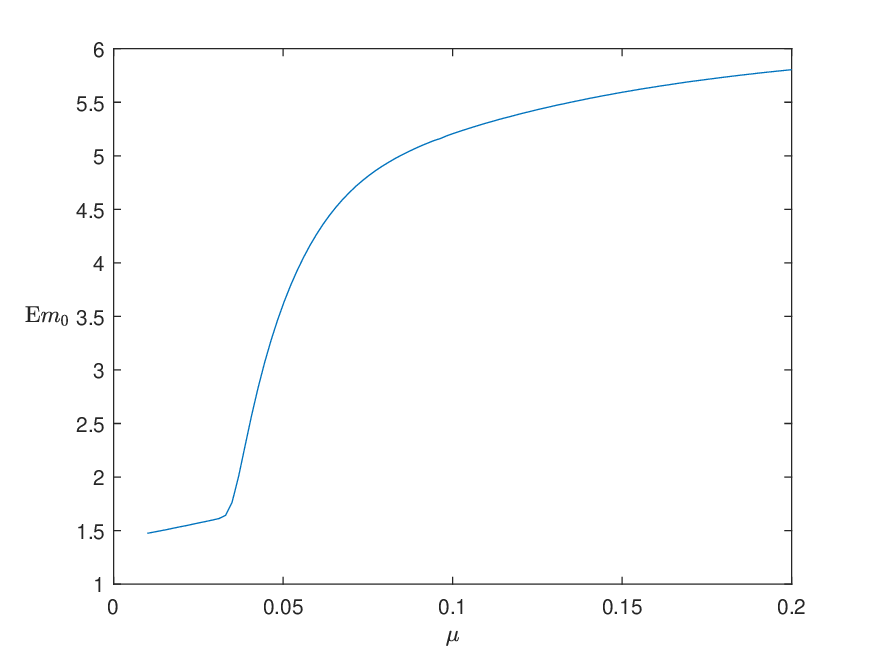}}\hfill
	\subfloat[Relationshiop between  $\lambda$ and $\mu$.]{\includegraphics[width=0.5\textwidth]{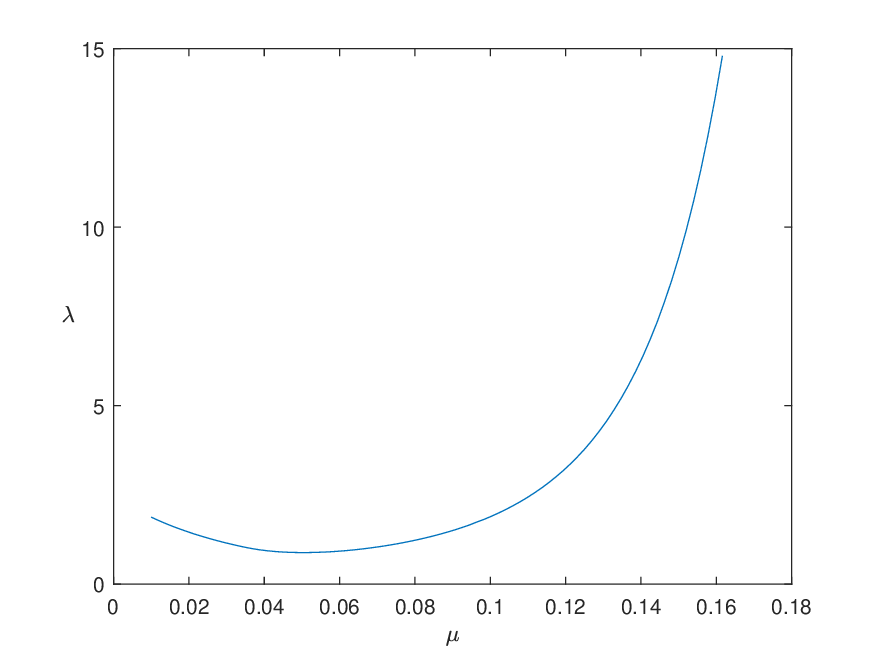}}\\
	
	\subfloat[Relationshiop between  $\E m_0$ and $\sigma$.]{\includegraphics[width=0.5\textwidth]{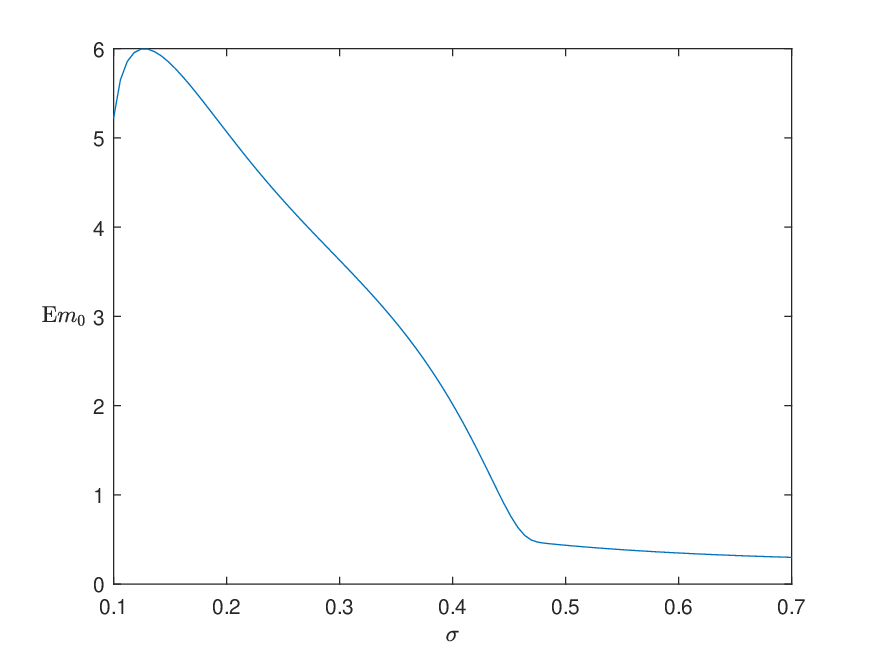}}\hfill
	\subfloat[Relationshiop between $\lambda$ and $\sigma$.]{\includegraphics[width=0.5\textwidth]{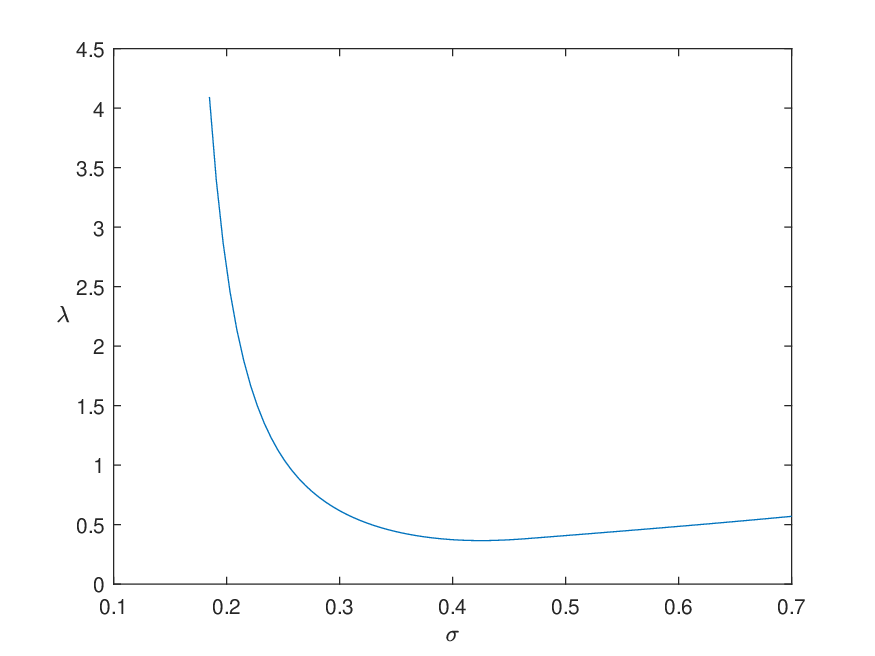}}\\
	
	
	\subfloat[Relationshiop between  $\E m_0$ and $\gamma$.]{\includegraphics[width=0.5\textwidth]{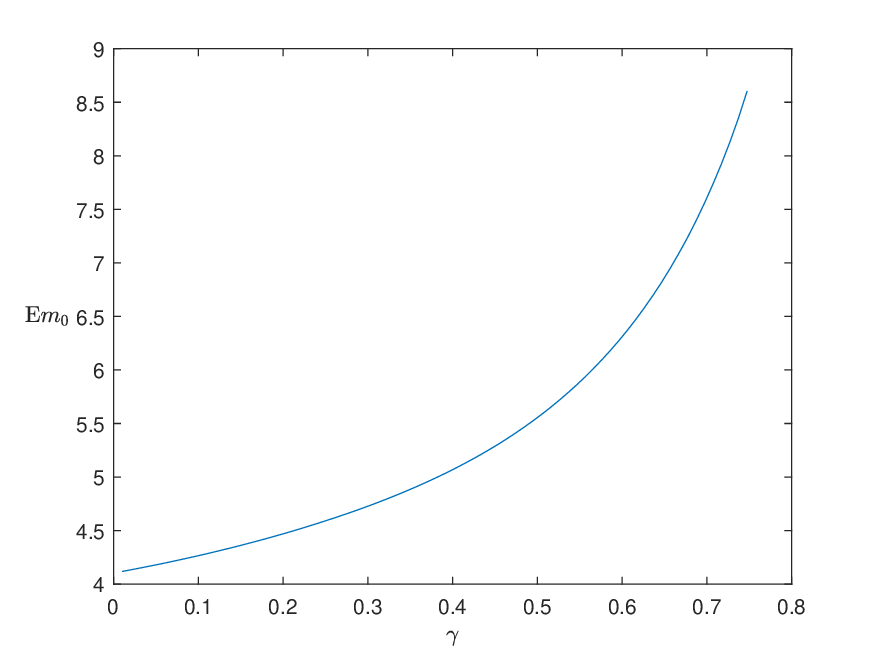}}\hfill
	\subfloat[Relationshiop between $\lambda$ and $\gamma$.]{\includegraphics[width=0.5\textwidth]{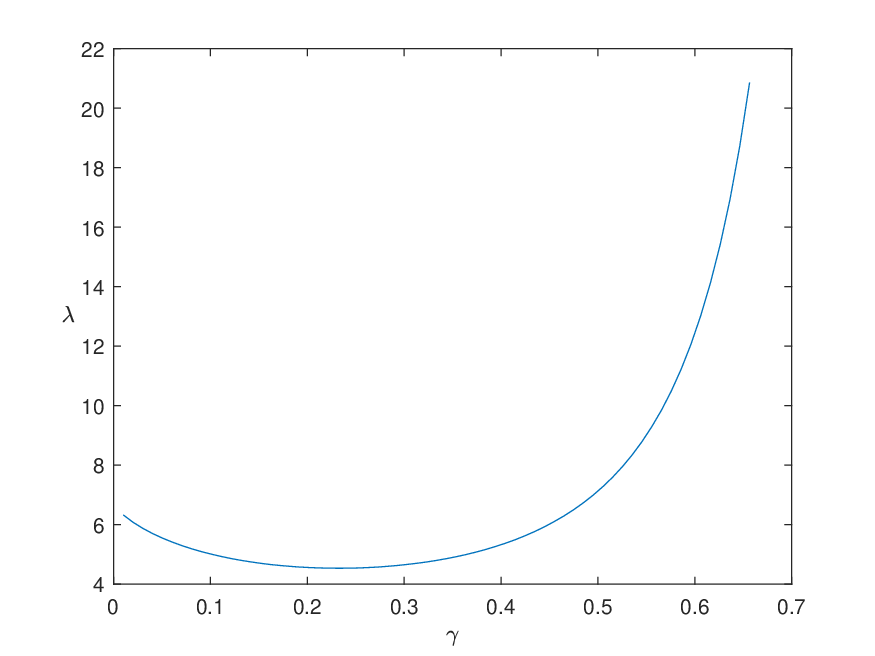}}
	
	\caption{Sensitivity analysis.}
	\label{fig:sensitivity}
\end{figure}
In this subsection,  {we present the relationships between $(OR, \E [m_0])$ and the market parameters $(\mu, \sigma, \gamma)$ in Figure \ref{fig:sensitivity}. As Figure \ref{fig6} and Figure \ref{fig7} indicate that the theoretical Omega ratio $\lambda$ and the realized Omega ratio $OR$ exhibit a similar trend, we use $\lambda$ instead of $OR$ in the following figures because the latter requires much more calculation than the former.} Figures \ref{fig:sensitivity}-(a) and \ref{fig:sensitivity}-(b) illustrate the impacts of $\mu$ on $\E [m_0]$ and $\lambda$. The influence of $\mu$ can be outlined in three key aspects. First, a higher expected return for the risky asset results in a more frequent occurrence of an upward benchmark. Second, this upward benchmark that needs to be tracked increases accordingly. Last, the actual return on investments in the risky asset also experiences an increase.
These combined effects lead to a larger proportion of the portfolio allocated to risky investments as $\mu$ increases. Furthermore, when $\mu$ increases, the Omega ratio  initially decreases and then increases. In the initial stage, the increase in $\mu$ transforms the benchmark from a downside protection level to an upward tracking benchmark. This transition elevates the challenge of achieving positive returns, thereby reducing the Omega ratio. In the latter stage, the performance of risky investment returns becomes exceptionally robust. Under such circumstances, not only is the benchmark tracked effectively, but also substantial excess returns are generated. As a result, the Omega ratio increases with the increase of $\mu$.

In Figures \ref{fig:sensitivity}-(c) and \ref{fig:sensitivity}-(d), we investigate the effects of $\sigma$ on $\E m_0$ and $\lambda$. The findings indicate that both $\E m_0$ and $\lambda$ exhibit negative correlations with $\sigma$. When $\sigma$ increases, it introduces heightened volatility in both upward and downward directions. In particular, the increased downward volatility has a substantial adverse impact on the Omega ratio, leading investors to reduce their allocation to risky assets.

Figures \ref{fig:sensitivity}-(e) and \ref{fig:sensitivity}-(f) examine the effects of $\gamma$ on $\E m_0$ and $\lambda$. The outcomes reveal that both $\E m_0$ and $\lambda$ exhibit positive correlations with $\gamma$. As $\gamma$ increases, investors exhibit a reduced level of risk aversion. In such a scenario, investors favor allocating a higher proportion to risky assets. Additionally, the weakening of loss aversion, as indicated by the increasing $\gamma$, contributes to an overall elevation in the Omega ratio.

\section{\bf Conclusion}

In this paper, we study the optimal risk multiplier within the VPPI strategy, which enables investors to adjust their risk exposure based on market conditions. Our objective is to maximize the extended Omega ratio while incorporating a binary stochastic benchmark. To address the non-concave nature of this optimization problem, we employ the stochastic version of the concavification technique, which enables us to derive semi-analytical solutions for the optimal risk multiplier. The value functions are divided into three categories, depending on the relationship between the fixed risk multiplier introduced in the previous work by \cite{zieling2014performance} and the value 1.

The findings of this study reveal that the optimal risk multiplier exhibits a hump-shaped pattern and tends to be lower compared to fixed multipliers. This suggests that investors adopt a cautious approach influenced by loss aversion. Our results highlight that, due to the influence of loss aversion, investors tend to favor lower leverage {at the beginning} and strategically pursue larger positive returns as the maturity date approaches. Interestingly, although the over-performance probability and winning rate may not reach ideal levels, the VPPI {strategy} excels in maximizing the Omega ratio through close benchmark tracking and reduced negative gaps. {Moreover, the impact of performance parameters, such as the guarantee ratio and capturing ratio, has a multifaceted effect on both investment and benchmark performance.}

\vskip 10pt\noindent
{\bf Acknowledgements} 

The authors acknowledge support from the National Natural Science Foundation of China (Grant Nos. 12371477, 11901574, 12271290, 11871036), the MOE Project of Key Research Institute of Humanities and Social Sciences (22JJD910003), and the National Social Science Fund of China (20AZD075). The authors thank the members of the group of Mathematical Finance and Actuarial Science at the Department of Mathematical Sciences, Tsinghua University for their feedback and useful conversations.

\bibliographystyle{apalike} 
\bibliography{refs}

\begin{thebibliography}{}

\bibitem[Ameur and Prigent, 2014]{ameur2014portfolio}
Ameur, H.~B. and Prigent, J.-L. (2014).
\newblock Portfolio insurance: Gap risk under conditional multiples.
\newblock {\em European Journal of Operational Research}, 236(1):238--253.

\bibitem[Annaert et~al., 2009]{annaert2009performance}
Annaert, J., Van~Osselaer, S., and Verstraete, B. (2009).
\newblock Performance evaluation of portfolio insurance strategies using
  stochastic dominance criteria.
\newblock {\em Journal of Banking and Finance}, 33(2):272--280.

\bibitem[Balder et~al., 2009]{balder2009effectiveness}
Balder, S., Brandl, M., and Mahayni, A. (2009).
\newblock Effectiveness of {CPPI} strategies under discrete-time trading.
\newblock {\em Journal of Economic Dynamics and Control}, 33(1):204--220.

\bibitem[Bernard et~al., 2019]{2019OptimalB}
Bernard, C., Vanduffel, S., and Ye, J. (2019).
\newblock Optimal strategies under {O}mega ratio.
\newblock {\em European Journal of Operational Research}, 275(2):755--767.

\bibitem[Bertrand and Prigent, 2001]{bertrand2001portfolio}
Bertrand, P. and Prigent, J.-L. (2001).
\newblock Portfolio insurance strategies: {OBPI} versus {CPPI}.
\newblock {\em Finance}, 26(1):5--32.

\bibitem[Bertrand and Prigent, 2011]{bertrand2011omega}
Bertrand, P. and Prigent, J.-L. (2011).
\newblock Omega performance measure and portfolio insurance.
\newblock {\em Journal of Banking and Finance}, 35(7):1811--1823.

\bibitem[Bertrand and Prigent, 2016]{bertrand2016portfolio}
Bertrand, P. and Prigent, J.-L. (2016).
\newblock {\em Extreme Events in Finance: A Handbook of Extreme Value Theory
  and Its Applications}.
\newblock Wiley Online Library.

\bibitem[Biedova and Steblovskaya, 2020]{biedova2020multiplier}
Biedova, O. and Steblovskaya, V. (2020).
\newblock Multiplier optimization for constant proportion portfolio insurance
  ({CPPI}) strategy.
\newblock {\em International Journal of Theoretical and Applied Finance},
  23(02):2050011.

\bibitem[Black and Jones, 1987]{black1987simplifying}
Black, F. and Jones, R. (1987).
\newblock Simplifying portfolio insurance.
\newblock {\em Journal of Portfolio Management}, 14(1):48.

\bibitem[Branger et~al., 2010]{branger2010optimal}
Branger, N., Mahayni, A., and Schneider, J.~C. (2010).
\newblock On the optimal design of insurance contracts with guarantees.
\newblock {\em Insurance: Mathematics and Economics}, 46(3):485--492.

\bibitem[Carpenter, 2000]{carpenter2000does}
Carpenter, J.~N. (2000).
\newblock Does option compensation increase managerial risk appetite?
\newblock {\em The Journal of Finance}, 55(5):2311--2331.

\bibitem[Cesari and Cremonini, 2003]{cesari2003benchmarking}
Cesari, R. and Cremonini, D. (2003).
\newblock Benchmarking, portfolio insurance and technical analysis: a {M}onte
  {C}arlo comparison of dynamic strategies of asset allocation.
\newblock {\em Journal of Economic Dynamics and Control}, 27(6):987--1011.

\bibitem[Chen et~al., 2008]{chen2008dynamic}
Chen, J.~S., Chang, C.~L., Hou, J.~L., and Lin, Y.~T. (2008).
\newblock Dynamic proportion portfolio insurance using genetic programming with
  principal component analysis.
\newblock {\em Expert Systems with Applications}, 35(1-2):273--278.

\bibitem[Cox and Huang, 1989]{CH1989}
Cox, J.~C. and Huang, C.~F. (1989).
\newblock Optimal consumption and portfolio policies when asset prices follow a
  diffusion process.
\newblock {\em Journal of Economic Theory}, 49(1):33--83.

\bibitem[De~Giorgi and Post, 2011]{de2011loss}
De~Giorgi, E.~G. and Post, T. (2011).
\newblock Loss aversion with a state-dependent reference point.
\newblock {\em Management Science}, 57(6):1094--1110.

\bibitem[Dehghanpour and Esfahanipour, 2017]{dehghanpour2017robust}
Dehghanpour, S. and Esfahanipour, A. (2017).
\newblock A robust genetic programming model for a dynamic portfolio insurance
  strategy.
\newblock In {\em 2017 IEEE International Conference on INnovations in
  Intelligent SysTems and Applications (INISTA)}, pages 201--206. IEEE.

\bibitem[Dichtl and Drobetz, 2011]{dichtl2011portfolio}
Dichtl, H. and Drobetz, W. (2011).
\newblock Portfolio insurance and prospect theory investors: Popularity and
  optimal design of capital protected financial products.
\newblock {\em Journal of Banking and Finance}, 35(7):1683--1697.

\bibitem[Dinkelbach, 1967]{dinkelbach1967nonlinear}
Dinkelbach, W. (1967).
\newblock On nonlinear fractional programming.
\newblock {\em Management Science}, 13(7):492--498.

\bibitem[Guastaroba et~al., 2016]{guastaroba2016linear}
Guastaroba, G., Mansini, R., Ogryczak, W., and Speranza, M.~G. (2016).
\newblock Linear programming models based on omega ratio for the enhanced index
  tracking problem.
\newblock {\em European Journal of Operational Research}, 251(3):938--956.

\bibitem[Hambardzumyan and Korn, 2019]{hambardzumyan2019dynamic}
Hambardzumyan, H. and Korn, R. (2019).
\newblock Dynamic hybrid products with guarantees-an optimal portfolio
  framework.
\newblock {\em Insurance: Mathematics and Economics}, 84:54--66.

\bibitem[Happersberger et~al., 2020]{happersberger2020estimating}
Happersberger, D., Lohre, H., and Nolte, I. (2020).
\newblock Estimating portfolio risk for tail risk protection strategies.
\newblock {\em European Financial Management}, 26(4):1107--1146.

\bibitem[He and Kou, 2018]{he2018profit}
He, X.~D. and Kou, S. (2018).
\newblock Profit sharing in hedge funds.
\newblock {\em Mathematical Finance}, 28(1):50--81.

\bibitem[He and Strub, 2022]{he2022endogenization}
He, X.~D. and Strub, M.~S. (2022).
\newblock How endogenization of the reference point affects loss aversion: a
  study of portfolio selection.
\newblock {\em Operations Research}, 70(6):3035--3053.

\bibitem[Herold et~al., 2007]{herold2007total}
Herold, U., Maurer, R., Stamos, M., and Vo, H.~T. (2007).
\newblock Total return strategies for multi-asset portfolios.
\newblock {\em Journal of Portfolio Management}, 33(2):60.

\bibitem[Kane et~al., 2009]{kane2009optimizing}
Kane, S.~J., Bartholomew-Biggs, M.~C., Cross, M., and Dewar, M. (2009).
\newblock Optimizing omega.
\newblock {\em Journal of Global Optimization}, 45(1):153--167.

\bibitem[Kapsos et~al., 2014]{kapsos2014worst}
Kapsos, M., Christofides, N., and Rustem, B. (2014).
\newblock Worst-case robust {O}mega ratio.
\newblock {\em European Journal of Operational Research}, 234(2):499--507.

\bibitem[Karatzas et~al., 1991]{KLSX1991}
Karatzas, I., Lehoczky, J.~P., Shreve, S.~E., and Xu, G.~L. (1991).
\newblock Martingale and duality methods for utility maximization in an
  incomplete market.
\newblock {\em SIAM Journal on Control and optimization}, 29(3):702--730.

\bibitem[Karatzas et~al., 1998]{karatzas1998methods}
Karatzas, I., Shreve, S.~E., Karatzas, I., and Shreve, S.~E. (1998).
\newblock {\em Methods of Mathematical Finance}, volume~39.
\newblock Springer.

\bibitem[Keating and Shadwick, 2002]{OmegaRatio}
Keating, C. and Shadwick, W.~F. (2002).
\newblock A universal performance measure.
\newblock {\em Journal of Performance Measurement}, 6(3):59--84.

\bibitem[Lee et~al., 2008]{lee2008new}
Lee, H.~I., Chiang, M.~H., and Hsu, H. (2008).
\newblock A new choice of dynamic asset management: the variable proportion
  portfolio insurance.
\newblock {\em Applied Economics}, 40(16):2135--2146.

\bibitem[Leland, 1980]{leland1980should}
Leland, H.~E. (1980).
\newblock Who should buy portfolio insurance?
\newblock {\em The Journal of Finance}, 35(2):581--594.

\bibitem[Liang et~al., 2021]{liang2021framework}
Liang, Z., Liu, Y., and Zhang, L. (2021).
\newblock A framework of state-dependent utility optimization with general
  benchmarks.
\newblock {\em arXiv preprint arXiv:2101.06675}.

\bibitem[Lin et~al., 2019]{POWPR}
Lin, H., Saunders, D., and Weng, C. (2019).
\newblock Portfolio optimization with performance ratios.
\newblock {\em International Journal of Theoretical and Applied Finance},
  22(05):1950022.

\bibitem[Schied, 2014]{schied2014model}
Schied, A. (2014).
\newblock Model-free cppi.
\newblock {\em Journal of Economic Dynamics and Control}, 40:84--94.

\bibitem[Sprenger, 2015]{sprenger2015endowment}
Sprenger, C. (2015).
\newblock An endowment effect for risk: Experimental tests of stochastic
  reference points.
\newblock {\em Journal of Political Economy}, 123(6):1456--1499.

\bibitem[Weng, 2013]{weng2013constant}
Weng, C. (2013).
\newblock Constant proportion portfolio insurance under a regime switching
  exponential {L}{\'e}vy process.
\newblock {\em Insurance: Mathematics and Economics}, 52(3):508--521.

\bibitem[Zhang et~al., 2015]{zhang2015optimal}
Zhang, T., Zhou, H., Li, L., and Gu, F. (2015).
\newblock Optimal rebalance rules for the constant proportion portfolio
  insurance strategy-evidence from {C}hina.
\newblock {\em Economic Systems}, 39(3):413--422.

\bibitem[Zieling et~al., 2014]{zieling2014performance}
Zieling, D., Mahayni, A., and Balder, S. (2014).
\newblock Performance evaluation of optimized portfolio insurance strategies.
\newblock {\em Journal of Banking and Finance}, 43:212--225.

\end{thebibliography}

\begin{appendix}
	
	\section{Proof of main results}
	\label{app1}
	\begin{proof}[Proof of Theorem \ref{thm of solution}]
		Define
		\begin{equation*}
			\underline{X}_{\lambda}(z,y)=\min \mathcal{X}_{\lambda}(z,y)=
			\left\{
			\begin{aligned}
				&~~0,~&&z\ge f_{\lambda}(y),\\
				&\left(\frac{z}{\gamma}\right)^{\frac{1}{\gamma-1}}+y,~&&z<f_{\lambda}(y),
			\end{aligned}
			\right.
		\end{equation*}
		and   $g(\nu)=\E\[\xi_T\underline{X}_{\lambda}\(\nu \xi_T,Y\)\]$ for $\nu>0$. As $\xi_T$ is log-normal distributed, the expectation is always well-defined, and one can verify  $g(0+)=+\infty$, $g(+\infty)=0$.
		Therefore, {Proposition 1 and Theorem 3 in \cite{liang2021framework} ensure the existence of optimal solution to Problem \eqref{redef of f} and the finiteness of Problem \eqref{redef of f} for all $k\in (0,1)$.} Moreover, the random set $\mathcal{X}_{\lambda}\(\nu \xi_T,Y\)$ is { not } single-valued if and only if $\nu \xi_T=f_{\lambda}(Y)$, which is equivalent to
		\begin{equation*}
			d\[\eta\left(a\xi_T^b-ke^{rT}\right)\id_{\{\xi_T<c\}}\]^{\gamma-1}=\nu\xi_T.
		\end{equation*}
		As the probability that the above equation holds is $0$, Theorem 3 in \cite{liang2021framework} ensures the uniqueness of the optimal solution and the continuity of $g$. Therefore, there exists $\nu^*>0$ such that $g(\nu^*)=1-k$, and again Theorem 3 ensures that $\underline{X}_{\lambda}\(\nu^* \xi_T,Y\)$ is the optimal solution to Problem \eqref{redef of f}.
	\end{proof}
	
\begin{proof}[Proof of Proposition \ref{prop linearized}]
	First, we prove that $f$ is finite. As $\xi_T$ is log-normal distributed, Proposition 1 in \cite{liang2021framework} ensures that the problem
	$\sup\limits_{C_T\in\mathcal{C}}~\E\left[U(C_T-Y)\id_{\{C_T>Y\}}\right]$ has a finite optimal value $v_1$. Therefore, we have
	$
	-\lambda \E\[U(Y)\]\le f(\lambda)\le v_1.
	$
	
	 Second, we prove that $f$ is continuous. For $0\le \lambda_1<\lambda_2$, we have

\begin{equation*}
	\begin{aligned}
		f(\lambda_1)&=\sup_{C_T\in\mathcal{C}}~\left\{\E\left[U(C_T-Y)\id_{\{C_T>Y\}}\right]-\lambda_1\E\left[U(Y-C_T)\id_{\{C_T\le Y\}}\right]\right\}\\
		&\le f(\lambda_2)+(\lambda_2-\lambda_1)\sup_{C_T\in\mathcal{C}}\E\left[U(Y-C_T)\id_{\{C_T\le Y\}}\right]\\
		&\le f(\lambda_2)+(\lambda_2-\lambda_1)\E \[U(Y)\].
	\end{aligned}
\end{equation*}
According to the definition of $f$, we know $f(\lambda_2) \le f(\lambda_1)$. Therefore, we have
$$
|f(\lambda_2)-f(\lambda_1)|\le(\lambda_2-\lambda_1)\E \[U(Y)\],
$$
which indicates that $f$ is continuous.

Third, we prove 
$
v_2\triangleq\inf\limits_{C_T\in\mathcal{C}}\E\left[U(Y-C_T)\id_{\{C_T\le Y\}}\right]>0.
$
Indeed,  Theorem 3 in \cite{liang2021framework} ensures that the problem
$
\sup\limits_{C_T\in\mathcal{C}}\E\left[-U(Y-C_T)\id_{\{C_T\le Y\}}\right]
$
admits an optimal solution $C^*$. Noting that $\E\[\xi_TC^*_T\]\le 1-k<\E\[\xi_T Y\]$, it is impossible that $C^*_T\ge Y$ holds almost everywhere. Therefore,
$
v_2 = \E\left[U(Y-C^*_T)\id_{\{C^*_T\le Y\}}\right]>0.
$

Using the results above, we obtain $f(\lambda)\le v_1-\lambda v_2$. The inequality indicates that $f(\lambda)<0$ for sufficiently large $\lambda$. Recall that $f$ is continuous and $f(0)=v_1\ge0$, the intermediate value theorem guarantees a zero point $\lambda^*$ for $f$.

For this zero point $\lambda^*$, we know 
$$
0=f(\lambda^*)=\sup_{C_T\in\mathcal{C}}~\left\{\E\left[U(C_T-Y)\id_{\{C_T>Y\}}\right]-\lambda^*\E\left[U(Y-C_T)\id_{\{C_T\le Y\}}\right]\right\}.
$$
This indicates that 
$\E\left[U(C_T-Y)\id_{\{C_T>Y\}}\right]-\lambda^*\E\left[U(Y-C_T)\id_{\{C_T\le Y\}}\right]\le0$ holds
for every $C_T\in\mathcal{C}$. Noting that $\E\[\xi_TY\]>1-k\ge \E\[\xi_TC_T\]$, we have
$\E\left[U(Y-C_T)\id_{\{C_T\le Y\}}\right]>0$. Therefore, the following inequality holds:
$$
\frac{\E\left[U(C_T-Y)\id_{\{C_T>Y\}}\right]}{\E\left[U(Y-C_T)\id_{\{C_T\le Y\}}\right]}\le\lambda^*.
$$
As $C_T^*$ is an optimal solution to the optimization problem $f(\lambda^*)$, we have
$$
0=f(\lambda^*)=\E\left[U(C_T^*-Y)\id_{\{C_T^*>Y\}}\right]-\lambda^*\E\left[U(Y-C_T^*)\id_{\{C_T^*\le Y\}}\right],
$$
and hence
$$
\frac{\E\left[U(C_T^*-Y)\id_{\{C_T^*>Y\}}\right]}{\E\left[U(Y-C_T^*)\id_{\{C_T^* \le Y\}}\right]}=\lambda^*.
$$
This result shows that $\lambda^*$ is the optimal value of Problem \eqref{reduced problem} and $C_T^*$ is an optimal solution.
\end{proof}

\end{appendix}
\end{document}